\renewcommand\footnotetextcopyrightpermission[1]{} 
\theoremstyle{definition}
\newtheorem{definition}{Definition}
    \newlist{treelist}{itemize}{10}
    \setlist[treelist]{label=\treelist@label}
    \tikzset{treelist line/.style={thick, line cap=round, rounded corners}}
    \def\treelist@label{%
        \begin{tikzpicture}[remember picture, baseline={([yshift=-.6ex] treelist-bullet-\the\enit@depth.center)}]
            \draw [treelist line] (0, 0) -- node (treelist-bullet-\the\enit@depth) {} ++(.5em, 0);
        \end{tikzpicture}%
        \ifnum\enit@depth>1
            \tikz[remember picture, overlay] \draw [treelist line] (treelist-bullet-\the\numexpr\enit@depth-1\relax.center) |- (treelist-bullet-\the\enit@depth.center);%
        \fi
    }
\renewcommand\@formatdoi[1]{\ignorespaces}
\renewcommand{\footnotetextcopyrightpermission}[1]{} 
\begin{document}
\title{Effective Community Search on Large Attributed Bipartite Graphs}

\author{Zongyu Xu}
\affiliation{
  \institution{Nanjing University of Science and Technology}
}
\email{zongyu.xu@njust.edu.cn}

\author{Yihao Zhang}
\affiliation{
  \institution{Nanjing University of Science and Technology}
}
\email{yhzhangeg@163.com}

\author{Long Yuan}
\authornote{*Corresponding authors}
\affiliation{
  \institution{Nanjing University of Science and Technology}
}
\email{longyuan@njust.edu.cn}

\author{Yuwen Qian}
\affiliation{
  \institution{Nanjing University of Science and Technology}
}
\email{admon@njust.edu.cn}

\author{Zi Chen}
\affiliation{
  \institution{East China Normal University}
}
\email{zchen@sei.ecnu.edu.cn}

\author{Mingliang Zhou}
\authornote{*Corresponding authors}
\affiliation{
  \institution{Chongqing University}
}
\email{mingliangzhou@cqu.edu.cn}

\author{Qin Mao}
\affiliation{
  \institution{Qiannan Normal Coll Nationalities}
}
\email{345379197@qq.com}

\author{Weibin Pan}
\affiliation{
  \institution{North Information Control Research Academy Group Co.}
}
\email{nnupwb@163.com}

\begin{abstract}
Community search over bipartite graphs has attracted significant interest recently. In many applications such as user-item bipartite graph in E-commerce, customer-movie bipartite graph in movie rating website, nodes tend to have attributes, while previous community search algorithm on bipartite graphs ignore attributes, which makes the returned results with poor cohesion with respect to their node attributes. In this paper, we study the  community search problem on attributed bipartite graphs. Given a query vertex q, we aim to find attributed $\left(\alpha,\beta\right)$-communities of $G$, where the structure cohesiveness of the community is described by an $\left(\alpha,\beta\right)$-core model, and the attribute similarity of two groups of nodes in the subgraph is maximized. In order to retrieve attributed communities from bipartite graphs, we first propose a basic algorithm composed of two steps: the generation and verification of candidate keyword sets, and then two improved query algorithms Inc and Dec are proposed. Inc is proposed considering the anti-monotonity property of attributed bipartite graphs, then we adopt different generating method and verifying order of candidate keyword sets and propose the Dec algorithm. After evaluating our solutions on eight large graphs, the experimental results demonstrate that our methods are effective and efficient in querying the attributed communities on bipartite graphs.
\end{abstract}

\keywords{Community search; Bipartite graphs; Attributed graphs.}
\maketitle

\section{Introduction}

With the proliferation of graph data, research efforts have been devoted to many fundamental problems in managing and analyzing graph data \cite{yuan2016diversified,yuan2017effective,chen2018exploring,chen2021efficient,huang2017community,yang2018multi,wu2019towards,chen2020efficient,DBLP:journals/pvldb/ZhangYLQZ21,DBLP:journals/pvldb/HaoYZ21,YangZWLXJ21,YangFZLJ21,yang2021corporate}.  Bipartite graphs are widely used to represent the relationships between two different types of entities in many real-world applications, such as user-page networks \cite{DBLP:conf/www/BeutelXGPF13,DBLP:journals/ijprai/QiaoFHZ21}, customer-product networks \cite{DBLP:conf/sigir/WangVR06,DBLP:journals/ijprai/QiHZY22}, collaboration networks \cite{DBLP:conf/spire/Ley02,DBLP:journals/ijprai/CaiMGW22}, gene co-expression networks \cite{DBLP:journals/isci/KaytoueKND11,DBLP:journals/ijprai/ZhuGZF22}. In these practical networks, community structure naturally exists, and a number of cohesive subgraph models (e.g., $\left(\alpha,\beta\right)$-core \cite{DBLP:conf/www/LiuYLQZZ19}, bitruss \cite{DBLP:conf/icde/Wang0Q0020}, and biclique \cite{DBLP:journals/pvldb/LyuQLZQZ20}) are proposed to capture the communities in the bipartite graphs. Following these models, community search over bipartite graphs that aims to find densely connected subgraphs satisfying specified structural cohesiveness conditions has been studied in  applications such as anomaly detection \cite{DBLP:journals/pvldb/LyuQLZQZ20}, personalized recommendation \cite{DBLP:journals/cn/KumarRRT99}, and  gene expression analysis \cite{DBLP:journals/tcbb/MadeiraO04}.

In the aforementioned real-world applications, the entities modeled by the vertices of bipartite graphs often have properties represented by text strings or keywords. When performing community search over such bipartite graphs, previous studies often only focus on the structural cohesiveness of communities but ignore the attributes of the vertices. However, these attributes are important for making sense of communities\cite{2020A,BERAHMAND2021104933,2022A}, and taking the attributes into consideration provides more personalization and interpretation regarding the returned results\cite{DBLP:journals/pvldb/HuangL17,DBLP:journals/pvldb/FangCLH16}, while there are few researches on community search based on attributed bipartite graphs.

\begin{figure}[H]\label{fig}
\centering
\includegraphics[width=85mm]{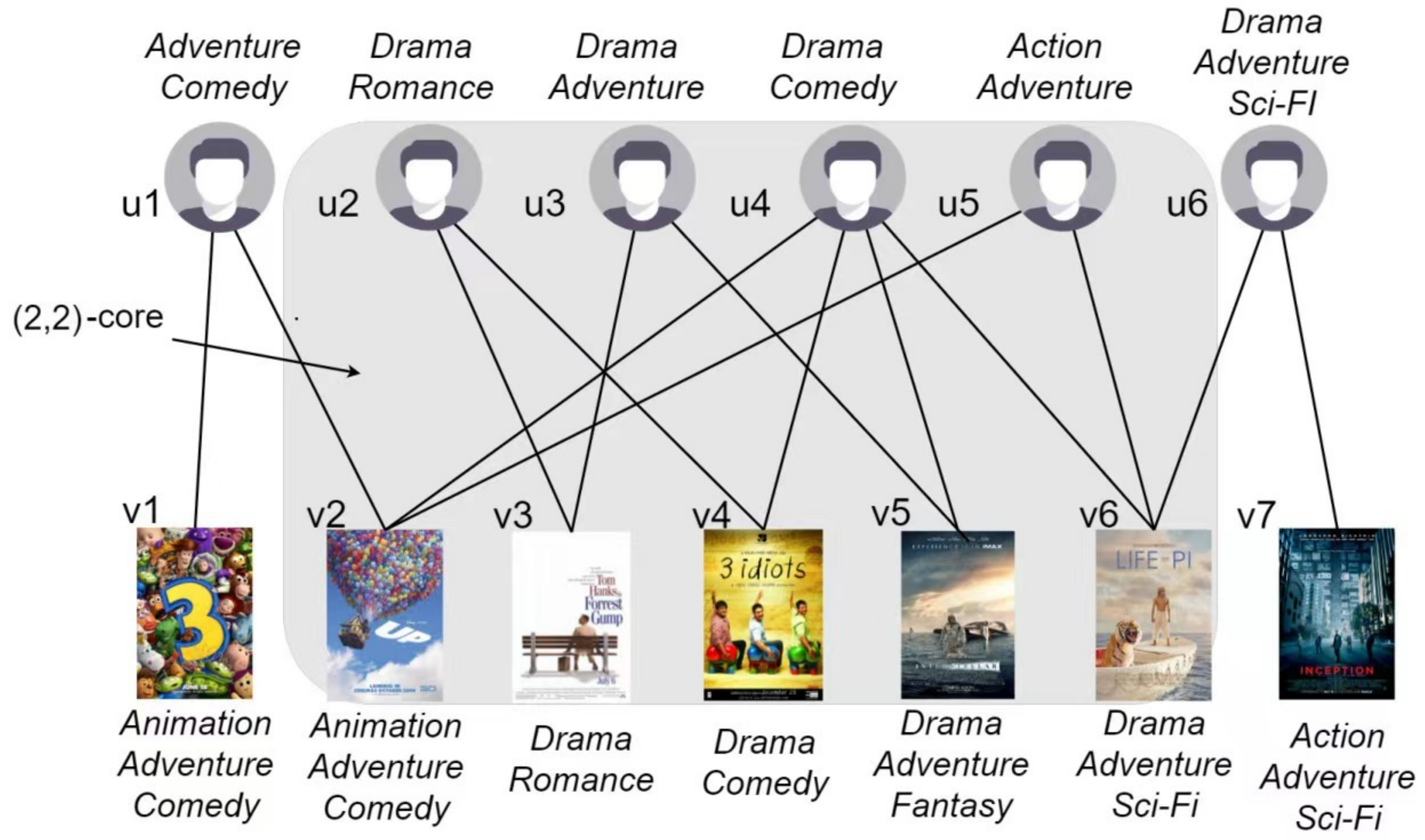}
\vspace{-0.2cm}
\caption{A customer-movie network}
\vspace{-0.5cm}
\end{figure}

Motivated by this, we study the $ attributed\ (\alpha,\beta) $-$community$  $ search $ problem on attributed bipartite graphs in this paper. Specifically, given an attributed bipartite graph $G$ and a query vertex $q \in G$,  we aim to find one or more attributed communities in $G$ such that these communities meet both structure cohesiveness (e.g., each vertex in upper layer has at least $\alpha$ neighbors and each vertex in lower layer has at least $\beta$ neighbors) and keyword cohesiveness (e.g., vertices in the same layer share the most keywords).

\noindent{\bf{Applications.}} Attributed $ (\alpha,\beta)$-community has many real-world applications. For example,

\begin{itemize}[leftmargin=*]
  \item Personalized product recommendation. Attributed $(\alpha,\beta)$-commu-nity can be used to recommend personalized products. Consider the sub customer-movie subnetwork of IMDB (\url{https://www.imdb.com}), where the vertices in the upper layer represent the consumers and the associated attributes describe his or her preference for movies, the vertices in the lower layer represent the movies and the associated attributes describe its genres. The platforms can utilize the attributed $(\alpha,\beta)$-community model to provide personalized recommendation. For example, as Fig.1 shows, if we regard $u_2$ as the query customer, we can find a (2,2)-community composed of viewers $\{u2, u3, u4, u5\}$ and movies $\{v2, v3, v4, v5, v6\}$. In this community “u2” who prefer “Drama” and “Romance” movies may not be interested in “v2”. We further consider the keyword cohesiveness of this community and find an attributed (2,2)-community containing viewers $\{u2, u3, u4\}$ who share the same preference for “Drama” movies and the movies $\{v3, v4, v5\}$ with genre “Drama”. We can recommend the movie “v5” which the user is likely to be interested in to the query viewer “u2”.
  \item Team Formation. In a bipartite graph composed of developers and projects, an edge between a developer and a project indicates that the developer participates in the project, the keywords of developers show their skills while that of projects indicate the technology it requires. When there is a new project to complete, a developer may wish to form a team as cohesive as possible with all developers in this team having the skills that the project requires, which can be supported by an attributed $ (\alpha,\beta)$-community search over the bipartite graph through specifying keywords of the new project.
\end{itemize}

Although attributed $ (\alpha,\beta)$-community search is useful in real applications. it is still inapplicable if the search cannot be finished efficiently, considering that attributed bipartite graph can be very large, and the (structure and keyword) cohesiveness criteria can be complex to handle. A simple way is first to consider all the possible attribute combinations, and then return the corresponding  $(\alpha, \beta)$-community that have the most shared attributes. However, the possible number of attribute combinations is exponential, which makes this approach infeasible in practice.

To address this problem, we observe that the attributed $ (\alpha,\beta)$-community owns the anti-monotonicity property, namely, for a given set $\mathcal{A}$ of attributes, if it appears in every vertex of an attributed $ (\alpha,\beta)$-community, then every subset $\mathcal{A}'$ of $\mathcal{A}$, there exists an attributed $ (\alpha,\beta)$-community in which every vertex contains   $\mathcal{A}'$. Following this observation, we devise efficient algorithms which can significantly reduce the search space when compute the results.

\noindent{\bf{Contributions.}} In this paper, we make the following contributions.

\begin{itemize}[leftmargin=*]
  \item The first work on attributed $(\alpha,\beta)$-community search over attributed bipartite graphs. In this paper, we propose the $(\alpha,\beta)$-community search problem. To the best of our knowledge, this is the first work on attributed $ (\alpha,\beta)$-community search. 
  \item Efficient algorithms to conduct the $(\alpha,\beta)$-community search. Based on the anti-monotonicity property, we devise efficient algorithms to conduct the $(\alpha,\beta)$-community search.
  \item Extensive experiments on real datasets. We conduct extensive experiments to evaluate the performance of the proposed algorithms. The experimental results demonstrates the efficiency of our proposed algorithms.
\end{itemize}

\noindent{\bf{Outline.}} The remainder of this paper is organized as follows. Section 2 presents some related works. Section 3 describes the proposed problem and definitions. A basic solution, enumerating all possible keyword sets and searching for $(\alpha,\beta)$-communities with the most shared keywords, is described in Section 4. Section 5 describes two more efficient algorithms generating and verifying candidate keyword sets in different ways. Section 6 discusses the obtained results with our approaches. Finally, conclusion will be found in Section 7.

\section{Related Work}
\subsection{Community search on unipartite graphs.} Community search performed on unipartite graphs usually using different cohesiveness models such as k-core\cite{seidman1983network}, k-truss \cite{cohen2008trusses}, clique\cite{fang2019efficient}. For a detailed survey, see Ref.~\cite{fang2020survey}. Based on k-core, two online algorithms and one index-based algorithm for k-core community search on unipartite graphs are studied, Cui et al.\cite{cui2014local} propose a local search algorithm, Sozio et al.\cite{sozio2010community}propose a global search algorithm, Barbieri et al.\cite{barbieri2015efficient} propose a tree-like index structure, and Wu et al.\cite{2021Efficient} study the maximal personalized influential community search. Using k-core, Fang et al.\cite{DBLP:journals/pvldb/FangCLH16,2017Effective,2019Effective} further integrate the attributes of vertices to identify community and then the spatial locations of vertices are also considered to identify community\cite{fang2017effective, wang2018efficient,DBLP:journals/jsa/JiWZCLZ21}. For the truss-based community search, Huang et al.\cite{huang2014querying} propose the triangle-connected k-truss community model and then study the closest model.\cite{huang2015approximate}, Akbas et al. \cite{akbas2017truss} also study the triangle-connected k-truss community model and propose an index-based search algorithm. Acquisti et al.\cite{acquisti2006imagined} present an efficient k-clique component detection algorithm and Yuan et al.\cite{yuan2017index} study the problem of densest clique percolation community search.

\subsection{Community search/detection on bipartite graphs.} On bipartite graphs, several existing works \cite{ding2017efficient,he2021exploring,DBLP:conf/www/LiuYLQZZ19,liu2020efficient} extend the k-core model on unipartite graph to the $(\alpha,\beta)$-core model. Ding et al.\cite{ding2017efficient}extend the linear k-core mining algorithm to compute $(\alpha,\beta)$-core. He et al.\cite{he2021exploring} first consider both tie strength and vertex engagement on bipartite graphs and propose a novel cohesive subgraph model. Liu et al.\cite{DBLP:conf/www/LiuYLQZZ19,liu2020efficient} present an efficient algorithm based on a novel index to compute $(\alpha,\beta)$-core in linear time regarding the result size. Based on the butterfly structure, Sariyuce et al.\cite{sariyuce2018peeling}, Wang et al.\cite{wang2019vertex,DBLP:conf/icde/Wang0Q0020}, Zou et al.\cite{zou2016bitruss} study the bitruss model in bipartite graphs which is the maximal subgraph where each edge is contained in at least k butterflies. Zhang et al.\cite{zhang2014finding} study the biclique enumeration problem. zhang et al.\cite{zhang2021pareto} are the first to consider both structure cohesiveness and weight of vertices on bipartite graphs and then propose a novel cohesive subgraph model. Wang et al.\cite{wang2021efficient} present a novel index structure and study the significant community search problem on weighted bipartite graphs, which is the first to study community search on bipartite graphs. However, community search on attributed bipartite graphs remains largely unexplored.

\section{Problem Definition}
Our problem is defined over an undirected attributed bipartite graph $ G=(U,V,E)$, which consists of nodes divided into two separate sets, $ U $ and $ V $, such that every edge connects one node in $ U $ to another node in $ V $. We use $ U(G) $ and $ V(G)$ to denote the two disjoint node sets of $ G $ and $ E(G) $ to represent the edge set of $ G $. Each vertex $ u\in U(G)\ ( v\in V(G)) $ is associated with a set of keywords denoted by $ W_U(u)\ ( W_V(v)) $. An edge $ e $ between two vertices $ u $ and $ v $ in $ G $ is denoted as $ (u,v) $. We denote the number of nodes in $ U(G) $ and $ V(G)$ as $ n_u $ and $ n_v $, the total number of nodes as $ n $ and the number of edges in $ E(G) $ as $ m $. The set of neighbors of a vertex $ u $ in $ G $ is denoted as $ N(u,G)=\{v\in V(G)\vert(u,v)\in E(G)\} $, and the degree of $ u $ is denoted as $ deg(u,G)=\vert N(u,G)\vert $. Table 1 lists the symbols used in the paper.


\begin{table}[th]
	\caption{Symbols and meanings}
		\label{tab1}
	\begin{center}
      \resizebox{85mm}{25mm}{
		\begin{tabular}{c|l}
			\toprule[1.3pt]
           \textbf{Symbol} & \textbf{Meaning} \\ \hline
  \hline
            G(U,V,E)     & An attributed bipartite graph with vertex set U and V, and  \\
                                              & edge set E         \\
           \hline
           $ W_U(u) $             & The keyword set of vertex u in U(G)         \\     
             \hline                       
           $ W_V(v)$              & The keyword set of vertex v in V(G)         \\  
  \hline                      
           $ deg(u,G)$                    & The degree of vertex u in U(G)          \\  
  \hline  
           $ deg(v,G)$                    & The degree of vertex v in V(G)          \\    
  \hline              
           $G[S_u^{'},S_v^{'}]$                & The largest connected subgraph of G s.t. q$ \in G[S_u^{'},S_v^{'}] $, and \\
                                                & $ \forall u\in G[S_u^{'},S_v^{'}], S_u^{'} \subseteq W_U(u) $, $ \forall v\in G[S_u^{'},S_v^{'}],S_v^{'} \subseteq W_V(v) $   \\     
  \hline                     
           $G_{(\alpha,\beta)}[S_u^{'},S_v^{'}]$       & The largest connected subgraph of G s.t. q$ \in G_{(\alpha,\beta)}[S_u^{'},S_v^{'}] $, \\
                                                & and $ \forall u\in G_{(\alpha,\beta)}[S_u^{'},S_v^{'}], deg(u,G)\geq\alpha, S_u^{'} \subseteq W_U(u) $, $ \forall v\in $  \\
                                                     & $  G_{(\alpha,\beta)}[S_u^{'},S_v^{'}], deg(v,G)\geq\beta, S_v^{'} \subseteq W_V(v) $   \\                                    
			\bottomrule[1.2pt]
		\end{tabular}}
	\end{center}
\end{table}

\begin{definition}[$ (\alpha,\beta)$-Core]\label{def1}
Given a bipartite graph $ G $ and two positive integers $ \alpha $ and $ \beta $, a subgraph $ C_{\alpha,\beta} $ is an $ (\alpha,\beta) $-core of $ G $ if $ deg(u,C_{\alpha,\beta})\geq\alpha $ for each $ u\in U(C_{\alpha,\beta}) $ and $ deg(v,C_{\alpha,\beta})\geq\beta $ for each $ v\in V(C_{\alpha,\beta}) $.
\end{definition}

\begin{figure}\label{fig}
\centering
\includegraphics[width=85mm]{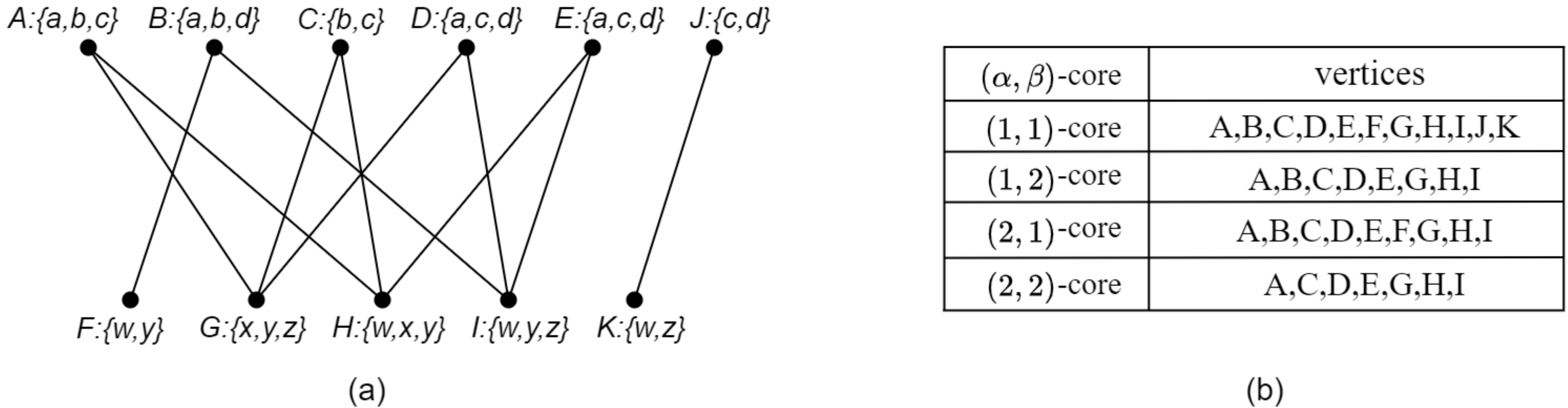}
\vspace{-0.3cm}
\caption{Illustrating the $ (\alpha,\beta) $-core}
\vspace{-0.5cm}
\end{figure}

\begin{example} In Fig.2(a), $\{A, C, D, E, G, H, I\}$ is a (2,2)-core. The (1,1)-core has vertices $\{A, B, C, D, E, F, G, H, I, J, K\}$, and is composed of two (1,1)-core components: $\{A, B, C, D, E, F, G, H, I\}$ and $\{J, K\}$. Each $(\alpha,\beta)$-core in Fig.2(a) is listed in Fig.2(b).	
\end{example}

\begin{definition}[$ (\alpha,\beta) $-Connected Component]\label{def2}
Given a bipartite graph $ G $ and its $ (\alpha,\beta) $-core, $ C_{\alpha,\beta} $, a subgraph $ G_{\alpha,\beta} $ is an $ (\alpha,\beta)$-connected component if (1)$ G_{\alpha,\beta}\subseteq C_{\alpha,\beta} $ and $ G_{\alpha,\beta} $ is connected; (2)$ G_{\alpha,\beta} $ is maximal.
\end{definition}

\begin{definition}[$ (\alpha,\beta) $-Community]\label{def3}
Given a vertex $ q $, we call the $ (\alpha,\beta) $-connected component containing $ q $ the $ (\alpha,\beta) $-community, denoted as $ G_{\alpha,\beta}(q) $.
\end{definition}

\begin{definition}[Attributed $ (\alpha,\beta) $-Community] \label{def4}
Given an attributed bipartite graph $ G $, two positive integers $ \alpha $ and $ \beta $, a query vertex $ q $ and a keyword set $ S\subseteq W(q) $ (i.e., $ q\in U(G)$), a subgraph $ g $ is an attributed $ (\alpha,\beta) $-community of $ G $ if it satisfies the following constraints:
\begin{enumerate}
 \item  \textbf{Connectivity Constraint}. $ g $ is a connected subgraph which contains $ q $.

 \item  \textbf{Structure Cohesiveness Constraint}. $\forall u\in U(g)$, $ deg(u,g)$ $\geq$$\alpha $ and $\forall v\in V(g) $, $ deg(v,g)\geq\beta $.

 \item  \textbf{Keyword Cohesiveness Constraint}. The size of $(\vert L_U(g)\vert+\vert L_V(g)\vert)$ is maximal, where $ L_U(g)=\cap_{u\in U(g)}(W_U(u)\cap S) $ represents the set of keywords shared in $ S $ by all vertices of $ U(g) $ and $ L_V(g)=\cap_{v\in V(g)}(W_V(v)) $ represents the set of keywords shared by all vertices of $ V(g) $.

 \item  \textbf{Maximality Constraint.} There exists no other $ g^{'}\supset g $ satisfying above constraints with $ L_U(g^{'})=L_U(g) $ and $ L_V(g^{'})=L_V(g) $.

\end{enumerate}
\end{definition} 

\begin{figure}[H]\label{fig}
\centering
\includegraphics[width=80mm]{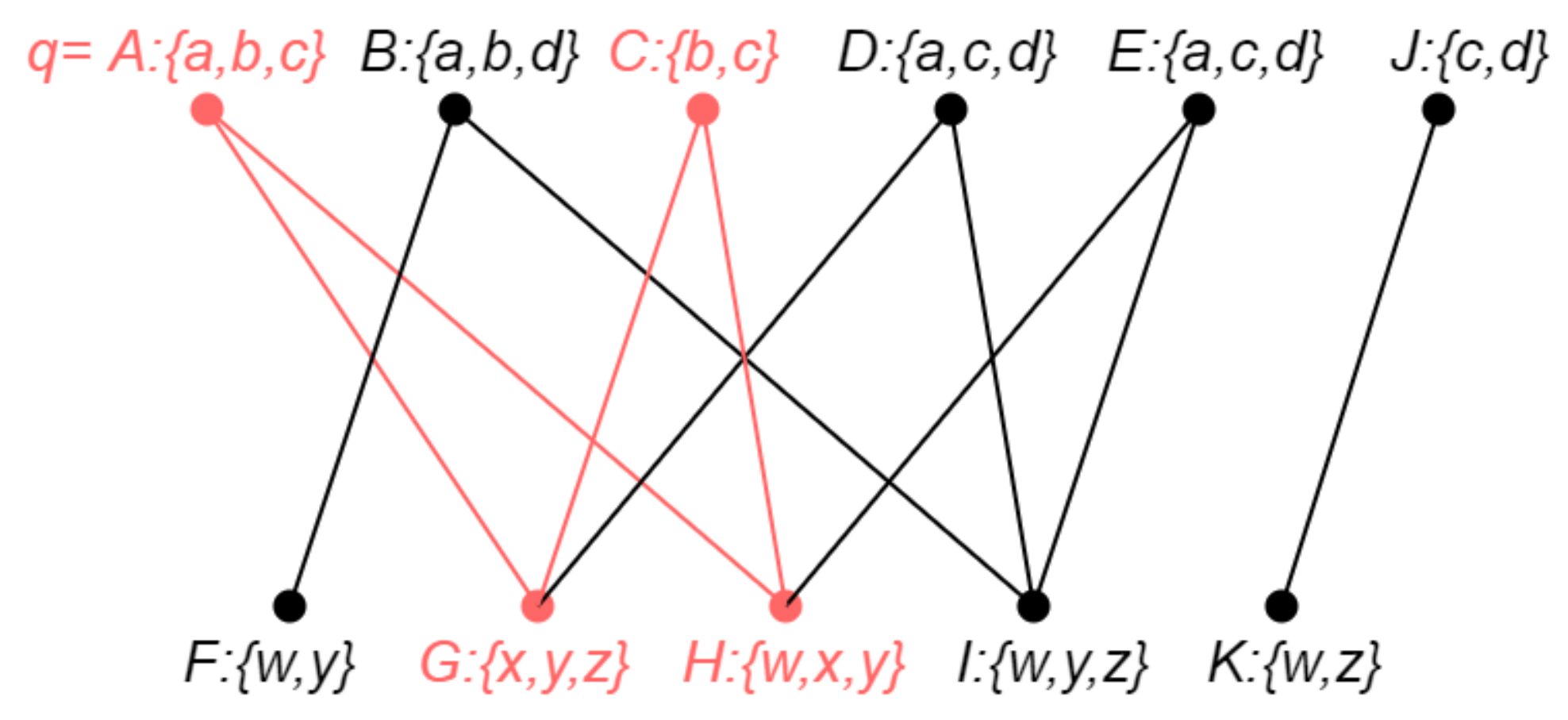}
\vspace{-0.3cm}
\caption{Illustrating an attributed $ (2,2) $-community $ g $}
\vspace{-0.5cm}
\end{figure}

\begin{example}
	Considering the bipartite graph G in Fig.2(a), let q=A, $\alpha $=2, $ \beta$=2. If $ S $=$\{a,b,c\}$, we can find an attributed $ (2,2) $-commu-nity $ g $ as Fig.3 illustrates (in red corlor), whose shared keyword set $ L_U(g)=\{ b,c\} $, $ L_V(g)=\{ x,y\}$.
\end{example}

\noindent\textbf{Problem Statement.} Given an attributed bipartite graph $ G $, parameters $ \alpha $ and $ \beta $, a query vertex $ q $ and a keyword set $ S\subseteq W(q) $, the $ attributed\ (\alpha,\beta) $-$community\ search $ problem aims to find the attributed $ (\alpha,\beta)$-communities in $ G $. For ease of representation, we regard $ q $ as a vertex in $ U(G) $ in this paper. Since the final result must contains $q$, we regard $ S $ as $S_U$, the maximum keyword set which is possible to be shared by all vertices in $ U(G)$.
\section{Basic Solution}

We use $ G\lbrack S_u,S_v\rbrack $ to denote the largest connected subgraph of $ G $, where each vertex in $ U(G\lbrack S_u,S_v\rbrack)(V(G\lbrack S_u,S_v\rbrack)) $ contains $ S_u(S_v) $ and $ q\in G\lbrack S_u,S_v\rbrack $. We use $ G_{\alpha,\beta}\lbrack S_u,S_v\rbrack $ to denote the largest connected subgraph of $ G\lbrack S_u,S_v\rbrack $, in which every vertex in $ U(G_{\alpha,\beta}\lbrack S_u,S_v\rbrack ) $ has degree being at least $ \alpha $ and every vertex in $ V(G_{\alpha,\beta}\lbrack S_u,S_v\rbrack ) $ has degree being at least $ \beta $. We call $ \{S_u,S_v\} $ a qualified keyword set for the query vertex $ q $ on the graph $ G $, if $ G_{\alpha,\beta}\lbrack S_u,S_v\rbrack $ exists. 

Given a query vertex $q$, a straightforward method to find the attributed $ (\alpha,\beta) $-communities in $ G $ performs three steps. First, for one layer of the bipartite graph which contains q, here we consider it as $ U(G) $ and consider $ S $ as $ S_U $, all nonempty subsets of $ S_U $, $ S_{U1} $, $ S_{U2} $, $...$, $ S_{U(2^{l}-1)}\ (l = \vert S_U\vert) $, are enumerated, and for each $ v\in V(G) $, we put all different keywords in $ W_V(v) $ into $ S_V $ and enumerate all nonempty subsets of $ S_V$$(i.e., S_{V1} $, $ S_{V2} $, $...$, $ S_{V(2^{k}-1)}\ (k = \vert S_V\vert)) $. Then for each set $ \{S_{Ui},S_{Vj}\}(1\leq i\leq2^l-1,1\leq j\leq2^k-1) $, we verify the existence of $ G_{(\alpha,\beta)}\lbrack S_{Ui},S_{Vj}\rbrack $ and compute it when it exists. Finally, we output the subgraphs having the most shared keywords among all $ G_{(\alpha,\beta)}\lbrack S_{Ui},S_{Vj}\rbrack $.

\begin{figure}[H]\label{fig}
\centering
\includegraphics[width=80mm]{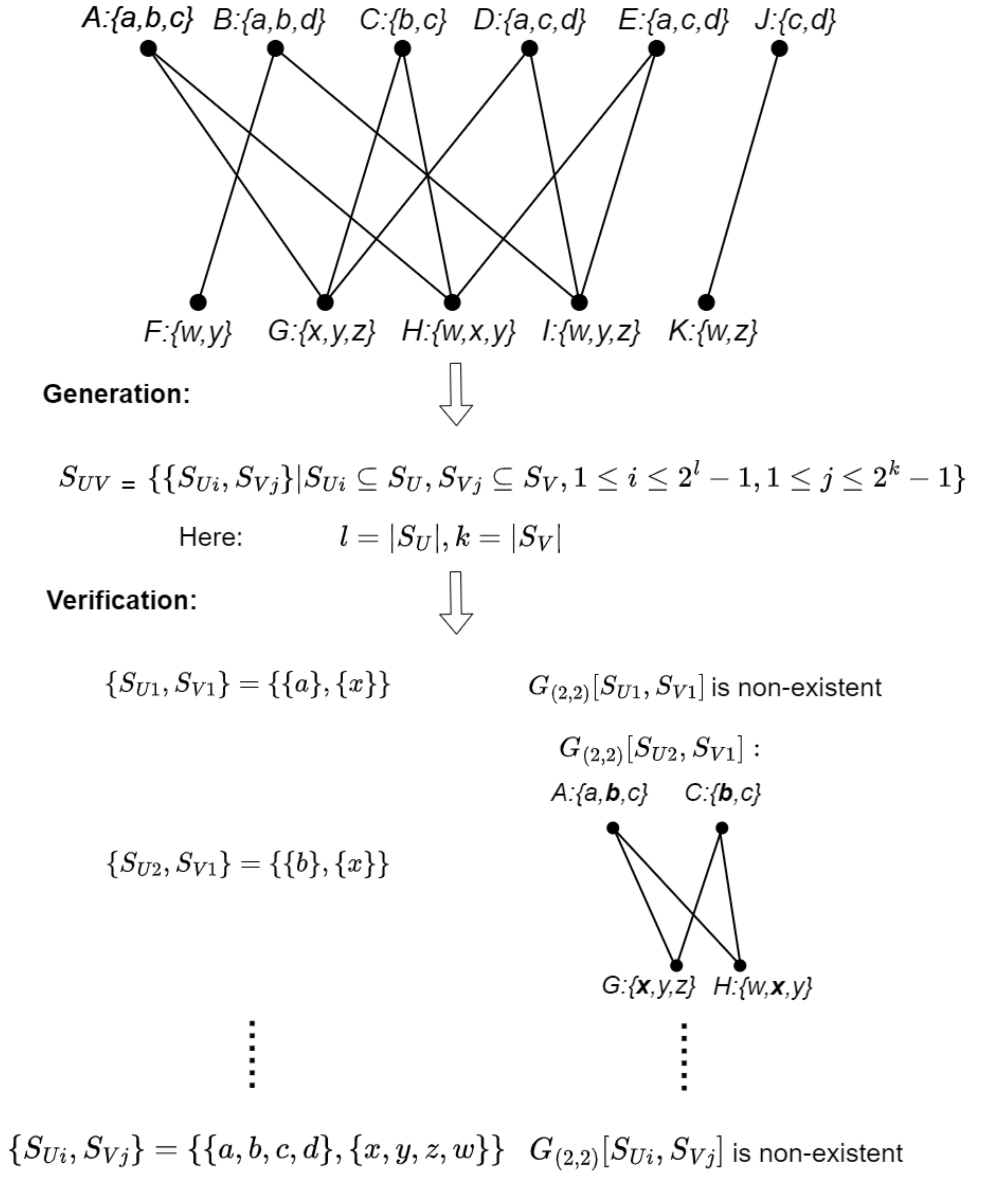}
\vspace{-0.2cm}
\caption{Generation and verification of candidate keyword sets}
\vspace{-0.5cm}
\end{figure}

We can summarize the straightforward method into a two-step framework, generation and verification of candidate keyword sets. Considering the bipartite graph $G$ in Fig.2(a), let $q$=$A$, $\alpha $=2, $ \beta$=2, $ S $=$\{a,b,c\}$, Fig.4 shows how we find attributed $ (2,2) $-communities through the two-step framework, and the the computational complexity for the proposed framework is the same as that for the $Basic$ algorithm mentioned below.

Here we first give the procedure to verify the existence of $G_{\alpha,\beta}$ $(q,G^{'}) $ in a given subgraph $G^{'}$ of $G$ for each given candidate keyword set.

\begin{algorithm}
\caption{Compute $ G_{\alpha,\beta}(q,G^{'}) $}\label{algorithm}
      \For{$ u\in U(G^{'}) $}{
          \If {$ deg(u,G^{'})<\alpha$}{
               remove $ u $ and its incident edges from $ G^{'} $\;
          }
      }
     \While{$ q\in G^{'} $}{
           $ x\leftarrow min_{v\in V(G')}deg(v,G') $\;
           \If{$ x\geq\beta $}{
            return $ G^{'} $\;
           \Else{
                 \For{$ v\in V(G^{'}) $}{
                    \If{$ deg(v,G^{'})<\beta$}{
                        \For{$ p\in N(v) $}{
                             remove $ (p,v) $\;
                             \If{$ deg(p,G^{'})<\alpha$}{
                                 remove $ p $ and its incident edges from $ G^{'}$\;
                              }
                         }remove $v$\;
                     }
                 }
            }
            }
       }
      return $ G_{\alpha,\beta}(q,G^{'}) $\;
\end{algorithm}

\begin{theorem}\label{thm1}
Given a bipartite graph $G$, It takes $O(d_{umax}\cdot (n_u+n_v\cdot d_{vmax}))$ to compute $ G_{\alpha,\beta}(q,G^{'}) $.
\end{theorem}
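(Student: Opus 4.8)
The plan is to account for the cost of Algorithm~\ref{algorithm} in two blocks --- the initializing \texttt{for} loop over $U(G')$, and the \texttt{while} loop --- assuming throughout that $G'$ is stored with adjacency lists and a live degree counter on every surviving vertex, so that deleting an edge (together with the two counter decrements) is $O(1)$ and deleting a vertex costs $O(1)$ per incident edge. For the first block I would exploit bipartiteness: this loop deletes only $U$-vertices and edges incident to them, so a $U$-vertex's degree never changes before that vertex is inspected; hence each of the $\le n_u$ vertices of $U(G')$ is inspected once in $O(1)$, and each one failing the test $deg(u,G')<\alpha$ is removed in $O(d_{umax})$ time (unlinking its $\le d_{umax}$ incident edges and decrementing the affected $V$-counters). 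This block therefore costs $O(n_u\cdot d_{umax})$, which already contributes the $d_{umax}\cdot n_u$ term.

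For the \texttt{while} loop I would split the cost as $(\text{number of rounds})\times(\text{scan cost per round}) + (\text{total deletion cost amortized over all rounds})$. Each non-terminating round deletes at least the minimum-degree vertex of $V(G')$ (which has degree $<\beta$ and so triggers the inner loop), so there are $O(n_v)$ rounds; within a round, recomputing the minimum $V$-degree and sweeping $V(G')$ each cost $O(n_v)$ apart from deletions. The deletions are controlled by a charging argument: each $V$-vertex is deleted at most once, and its deletion walks its $\le d_{vmax}$ incident edges, removing each in $O(1)$ and then testing --- and, if it has dropped below $\alpha$, deleting --- its $U$-endpoint; each $U$-vertex is deleted at most once at cost $O(d_{umax})$; because no removed vertex or edge can reappear or persist in an adjacency list, this cascade terminates, and summing over the $\le m \le n_v\cdot d_{vmax}$ edges and $\le n_u$ $U$-vertices, while (generously) charging the "test-and-delete-the-$U$-endpoint" step $O(d_{umax})$ per handled edge, bounds the total deletion cost by $O(d_{umax}\cdot(n_u + n_v\cdot d_{vmax}))$. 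It then remains to absorb the per-round scan contribution into this bound.

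The crux --- the step I expect to be the real obstacle --- is controlling the number of \texttt{while} rounds: for the stated complexity this count must be $O(d_{umax})$ so that $(\text{rounds})\times O(n_v)$ is dominated by $O(d_{umax}\cdot n_v\cdot d_{vmax})$, whereas the obvious upper bound is only $O(n_v)$. The difficulty is that deleting a $V$-vertex can push a $U$-vertex below $\alpha$, whose deletion can push further $V$-vertices below $\beta$; if such a newly weakened $V$-vertex precedes the triggering deletion in the current sweep order it survives that sweep and forces an extra round. To finish I would either prove an invariant bounding how far this back-propagation can reach per sweep in terms of $d_{umax}$, or --- cleaner --- note that replacing the blind sweep by a worklist of currently-weak $V$-vertices collapses the loop to a single pass, after which the $O(n_u + m)$ scanning and the $O(d_{umax}\cdot m)$ cascade dominate and the bound follows at once. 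Finally I would dispatch the degenerate cases (the query vertex already removed in the first block, or $G_{\alpha,\beta}(q,G')=\emptyset$), where the loop exits early and the concluding \texttt{return} contributes nothing.
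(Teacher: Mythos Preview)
Your decomposition into the initializing \texttt{for} block and the \texttt{while} block matches the paper's, and your bound $O(n_u\cdot d_{umax})$ for the first block is exactly what the paper writes. From there, however, your argument is far more careful than the paper's. The paper's entire justification for the second block is the single clause ``the while loop in line 4--15 cost $O(n_v\cdot d_{vmax}\cdot d_{umax})$'': it simply multiplies the three nesting levels (the sweep over $V(G')$, the neighbor loop of size $\le d_{vmax}$, and the $O(d_{umax})$ cascade on a $U$-vertex) and never mentions the number of outer \texttt{while} rounds at all. There is no amortization, no charging, no worklist --- the round-count question you single out as the crux is not addressed.

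So the point you flag as ``the real obstacle'' is precisely the gap in the paper's own argument, not a difficulty you introduced. Your worklist reformulation is the clean fix and yields the stated bound immediately; by contrast, the first option you list --- proving that the algorithm \emph{as written} needs only $O(d_{umax})$ sweeps --- is unlikely to go through in general (one can arrange long bipartite ``ladders'' where a single deletion per sweep cascades backwards, forcing $\Theta(n_v)$ rounds and $\Theta(n_v^2)$ scan work even with bounded degrees). In short: your plan is sound, differs from the paper mainly in being rigorous where the paper is not, and your instinct to replace the blind sweep by a worklist is the right way to actually earn the claimed complexity.
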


\begin{proof}
There are $n_u$ nodes in $U(G^{'})$, $n_v$ nodes in $V(G^{'})$ and we denote the largest degree of these nodes in $U(G^{'})(V(G^{'}))$ as $d_{umax}(d_{vmax})$. Removing all $u\in U(G)$ with degree less than $\alpha$ cost $O(n_u\cdot d_{umax})$, and the while loop in line 4-15 cost $O(n_v\cdot d_{vmax}\cdot d_{umax})$.
\end{proof}

\begin{algorithm}
\caption{Basic}\label{algorithm}
      Initialize $ \psi $ using $ S $, $ \varphi $ using $ V(G) $\;
      \While{true}{
      $ max\leftarrow0 $, $ m\leftarrow0 $, $ \phi_m\leftarrow\varnothing $\;
      \For{$ \psi^{'}\in\psi $}{
          \For{$ \varphi^{'}\in\varphi $}{
            find $ G\lbrack\psi^{'},\varphi^{'}\rbrack $ from $ G $\;
            Compute $ G_{(\alpha,\beta)}\lbrack\psi^{'},\varphi^{'}\rbrack $ from $ G\lbrack\psi^{'},\varphi^{'}\rbrack $\;
            \If{$ G_{(\alpha,\beta)}\lbrack\psi^{'},\varphi^{'}\rbrack $ exists}{
            $ m\leftarrow(\vert\psi^{'}\vert+\vert\varphi^{'}\vert) $\;
            \If{$ max\leq m $}{
            $ max\leftarrow m $\;
            $ \phi_m.add(\psi^{'}+\varphi^{'}) $\;
}
}
}
}
}
     \If{$ \phi_m\neq\varnothing $}{
      output the communities of keyword sets in $ \phi_m $\;
}
\end{algorithm}

Based on the straightforward method, we present Algorithm2, a baseline query algorithm called $ Basic $. The input of $ basic $ is a bipartite graph $ G $, a query vertex q, two positive integers $ \alpha $ and $ \beta $, and a set $ S $. It first initializes a set, $ \psi $, of candidate keyword sets with each being a nonempty subset of $ S (i.e.,S_1,S_2,S_3,...,S_{U(2^{l}-1)}(l=\vert S\vert)) $ (line 1). After that, for each vertex in $ V(G) $, we enumerate all nonempty subsets of $ W_V(v) $, put them into $ \varphi $ and ensure that each element in $ \varphi $ appears only once. In the while loop (lines 2–12), it first set $ m = 0 $,indicating the size of current keyword sets, $ max =0 $, indicating the maximal size of all keyword sets and an empty set $ \phi_m $ (line 3) for collecting all the qualified keyword sets. Then for each $ \psi^{'}\in\psi $ and for each $ \varphi^{'}\in\varphi $, it finds $ G_{\alpha,\beta}\lbrack \psi^{'},\varphi^{'}\rbrack $ from $ G_{\alpha,\beta} $ by considering the keyword and degree constraints (line 4-7). If $ G_{\alpha,\beta}\lbrack \psi^{'},\varphi^{'}\rbrack $ exists, the sum of numbers of elements in $ \psi^{'} $ and $ \varphi^{'} $ is recorded by $ m $. Then we compare $ m $ with $ max $. If $ max\leq m $, it then assign $ m $ to $ max $ and put the set of current keywords in $ \psi^{'} $ and $ \varphi^{'} $ into $ \phi_m $ (line 10-12). After checking all the candidate keyword sets in $ \psi $ and $ \varphi $, if there are at least one qualified keyword sets in $ \phi_m $, it output the communities of keyword sets in $ \phi_m $ (line 13-14).

\begin{theorem}\label{thm2}
Given a bipartite graph G, $Basic$ computes $ G_{\alpha,\beta}\lbrack S_u,$ $\ S_v\rbrack $ in $O(n_v\cdot2^{\vert S_v\vert_{max}}\log(n_v\cdot2^{\vert S_v\vert_{max}})+2^{\vert S\vert}\cdot 2^{\vert S_v\vert_{max}}\cdot O(compute$ $\ G_{\alpha,\beta}$$(q,G^{'})))$.
\end{theorem}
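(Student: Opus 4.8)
The plan is to split the running time of $Basic$ into two phases and bound each separately: the preprocessing that builds the two collections of candidate keyword sets $\psi$ and $\varphi$ (line~1), and the nested loop that, for every pair $(\psi^{'},\varphi^{'})$, extracts $G[\psi^{'},\varphi^{'}]$ and peels it down to $G_{(\alpha,\beta)}[\psi^{'},\varphi^{'}]$ (lines~2--12); adding the two bounds gives the claim. The output step (lines~13--14) only re-reports subgraphs already computed, so it is absorbed.

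First I would analyze the construction of the candidate sets. The collection $\psi$ consists of all nonempty subsets of $S$, so $|\psi|\le 2^{|S|}-1$ and it is produced in $O(2^{|S|})$ time. For $\varphi$ we iterate over the $n_v$ vertices of $V(G)$; each vertex $v$ contributes the at most $2^{|S_v|_{max}}-1$ nonempty subsets of its keyword set $W_V(v)$, so at most $n_v\cdot 2^{|S_v|_{max}}$ sets are generated overall. Removing duplicates is done by sorting these sets (e.g.\ lexicographically on their sorted keyword lists), which costs $O\bigl(n_v\cdot 2^{|S_v|_{max}}\log(n_v\cdot 2^{|S_v|_{max}})\bigr)$ --- exactly the first term of the bound --- after which $\varphi$ holds $O(2^{|S_v|_{max}})$ distinct candidate sets.

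Next I would bound the nested loop, which runs $|\psi|\cdot|\varphi| = O\bigl(2^{|S|}\cdot 2^{|S_v|_{max}}\bigr)$ times. In each iteration: finding $G[\psi^{'},\varphi^{'}]$ is a restricted traversal from $q$ visiting only the vertices of $U$ that contain $\psi^{'}$ and the vertices of $V$ that contain $\varphi^{'}$, costing $O(n+m)$; computing $G_{(\alpha,\beta)}[\psi^{'},\varphi^{'}]$ from it is one call to Algorithm~1, costing $O\bigl(d_{umax}\cdot(n_u+n_v\cdot d_{vmax})\bigr)$ by Theorem~\ref{thm1}, which dominates $O(n+m)$; and the bookkeeping that updates $max$, $m$ and possibly $\phi_m$ costs $O(|\psi^{'}|+|\varphi^{'}|)=O(|S|+|S_v|_{max})$, again dominated. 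Thus one iteration costs $O(\text{compute } G_{\alpha,\beta}(q,G^{'}))$ and the whole loop costs $O\bigl(2^{|S|}\cdot 2^{|S_v|_{max}}\cdot O(\text{compute } G_{\alpha,\beta}(q,G^{'}))\bigr)$; summing with the preprocessing term yields the stated complexity.

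The hard part is keeping the second term honest: $|\varphi|=O(2^{|S_v|_{max}})$ holds only when the $V$-side keywords come from a shared universe of size $|S_v|_{max}$, and in general one can only guarantee $|\varphi|=O(n_v\cdot 2^{|S_v|_{max}})$, so I would either propagate the extra $n_v$ factor through the loop bound or make that assumption explicit. The remaining points are routine: checking that the per-iteration subgraph extraction and the maintenance of $\phi_m$ are genuinely dominated by the $G_{\alpha,\beta}$ computation of Theorem~\ref{thm1}, and that the outer \texttt{while} wrapper of lines~2--12 does not alter the asymptotics.
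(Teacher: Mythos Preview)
Your proposal is correct and follows essentially the same two-phase decomposition as the paper: the paper bounds the initialization of $\psi$ and $\varphi$ by $O(2^{|S|}+n_v\cdot 2^{|S_v|_{max}}\log(n_v\cdot 2^{|S_v|_{max}}))$ and the nested loop by $O(2^{|S|}\cdot 2^{|S_v|_{max}}\cdot O(\text{compute }G_{\alpha,\beta}(q,G')))$, exactly as you argue. The caveat you raise about $|\varphi|=O(2^{|S_v|_{max}})$ requiring a shared $V$-side keyword universe is well taken; the paper's proof simply asserts this bound without addressing that point.
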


\begin{proof}
We use $\vert S_v\vert_{max}$ to represent the $ W_V(v) $ of largest size among all $v\in V(G)$, Initializing $ \psi $ and $ \varphi $ can be completed in $O(2^{\vert S\vert}+n_v\cdot2^{\vert S_v\vert_{max}}\log(n_v\cdot2^{\vert S_v\vert_{max}}))$ and the while loop in line 2-12 costs $O(2^{\vert S\vert}\cdot 2^{\vert S_v\vert_{max}}\cdot O(compute\  G_{\alpha,\beta}(q,G^{'})))$.
\end{proof}

One major drawback of the straightforward method is that we need to compute $ (2^l-1)\times(2^k-1) $ subsets of attributes and verify the existence of corresponding subgraphs (i.e.,$ G_{(\alpha,\beta)}\lbrack S_{Ui},S_{Vj}\rbrack $). For large values of $ l $ and $ k $, the computation overhead makes this method impractical. To alleviate this problem, we study methods to simplify the generation and verification of candidate keyword sets, and propose two improved algorithms.

\section{Improved Attributed $(\alpha, \beta)$-community Search Algorithm}
In this section, we shrink the range of possible candidate keyword sets and develop two more efficient algorithms: the incremental algorithm ($ Inc $) verify the candidate sets from smaller to larger ones while the decremental algorithm ($ Dec $) examine larger candidate sets to smaller ones.

\subsection{The Incremental Algorithm}

Attributed bipartite graphs have the anti-monotonicity property regarding the attributed $ (\alpha,\beta)$-community search, which is  shown in the  following lemma:

\begin{lemma}\label{lem1}
Given a graph $ G $, a vertex $ q\in G $, set $ S_u $ and $ S_v $ of keywords, if there exists a subgraph $ G_{\alpha,\beta}\lbrack S_u,S_v \rbrack $ , then there exists a subgraph $ G_{\alpha,\beta}\lbrack S_u^{'},S_v^{'}\rbrack \supseteq G_{\alpha,\beta}\lbrack S_u,S_v \rbrack $ for any subset $ S_u^{'}\subseteq S_u,S_v^{'}\subseteq S_v $. 
\end{lemma}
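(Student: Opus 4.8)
The plan is to prove the lemma directly from the definitions of $G[S_u,S_v]$ and $G_{\alpha,\beta}[S_u,S_v]$, exploiting the fact that enlarging the ``forbidden keyword'' constraint only ever removes vertices. First I would observe the key monotonicity of the keyword filter: if $S_u' \subseteq S_u$ and $S_v' \subseteq S_v$, then every vertex $u$ with $S_u \subseteq W_U(u)$ also satisfies $S_u' \subseteq W_U(u)$, and likewise on the $V$ side. Hence the vertex set that survives the keyword filter for $(S_u',S_v')$ is a superset of the one that survives for $(S_u,S_v)$. Since $G[S_u,S_v]$ is defined as the \emph{largest} connected subgraph through $q$ respecting these keyword constraints, this gives $G[S_u,S_v] \subseteq G[S_u',S_v']$ as long as both are nonempty (and $G[S_u,S_v]$ nonempty by hypothesis forces $q$ to pass the looser filter, so $G[S_u',S_v']$ is nonempty too).

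Next I would lift this containment through the $(\alpha,\beta)$-core peeling. Write $H = G_{\alpha,\beta}[S_u,S_v]$, which exists by hypothesis; it is a connected subgraph containing $q$, it sits inside $G[S_u,S_v] \subseteq G[S_u',S_v']$, and every $u \in U(H)$ has $deg(u,H) \ge \alpha$ while every $v \in V(H)$ has $deg(v,H) \ge \beta$. So $H$ is itself a connected $(\alpha,\beta)$-core-satisfying subgraph of $G[S_u',S_v']$ through $q$. Now $G_{\alpha,\beta}[S_u',S_v']$ is \emph{by definition} the largest connected subgraph of $G[S_u',S_v']$ containing $q$ in which the degree bounds $\alpha,\beta$ hold; the standard fact that the union of two connected $(\alpha,\beta)$-core subgraphs sharing the vertex $q$ again satisfies the degree bounds (degrees only increase under union) shows this ``largest'' object is well defined and contains every such subgraph. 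Therefore $G_{\alpha,\beta}[S_u',S_v']$ exists and $G_{\alpha,\beta}[S_u',S_v'] \supseteq H = G_{\alpha,\beta}[S_u,S_v]$, which is exactly the claim.

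I would present the argument in two short steps matching the above: (1) keyword-filter monotonicity giving $G[S_u,S_v]\subseteq G[S_u',S_v']$; (2) the peeling/maximality step giving the core-level containment. It is enough to do the induction on removing a single keyword at a time (from $S_u$ or from $S_v$) and then iterate, since $S_u'\subseteq S_u$ and $S_v'\subseteq S_v$ differ by finitely many keywords; this keeps the bookkeeping trivial.

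The main obstacle is not difficulty but making the ``largest connected subgraph'' language precise enough that the containment is actually forced rather than merely plausible: one has to note that ``largest'' here means containing-every-other-qualifying-subgraph (a maximum, not just a maximal, element), and justify that such a maximum exists by the union-closure property of the $(\alpha,\beta)$-core constraint among connected subgraphs through a common vertex. Once that is pinned down, the rest is immediate from set inclusion. A secondary subtlety is confirming nonemptiness propagates correctly --- i.e. that the hypothesized existence of $G_{\alpha,\beta}[S_u,S_v]$ certifies $q$ passes the looser keyword filter and lies in a nonempty core --- but this follows at once since $q \in G_{\alpha,\beta}[S_u,S_v]$ already witnesses a qualifying subgraph for the parameters $(S_u',S_v')$.
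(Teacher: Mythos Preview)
Your proposal is correct and follows essentially the same approach as the paper: observe that $G_{\alpha,\beta}[S_u,S_v]$ itself satisfies the weaker keyword constraints $(S_u',S_v')$ and the degree constraints, hence witnesses a qualifying subgraph through $q$, and then invoke maximality of $G_{\alpha,\beta}[S_u',S_v']$ to get the containment. Your treatment is in fact more careful than the paper's---you explicitly justify why ``largest'' is a maximum via the union-closure property, whereas the paper simply asserts that a maximal such subgraph exists---but the underlying argument is the same.
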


\begin{proof}
Based on the definition of $ G_{\alpha,\beta}\lbrack S_u,S_v \rbrack $, each vertex in $ U(G_{\alpha,\beta}\lbrack S_u,S_v \rbrack) $ contains $ S_u $ and each vertex in $ V(G_{\alpha,\beta}\lbrack S_u,S_v \rbrack) $ contains $ S_v $. Consider two new keyword sets $ S_u^{'}\subseteq S_u,S_v^{'}\subseteq S_v $, we can easily conclude that each vertex in $ U(G_{\alpha,\beta}\lbrack S_u,S_v \rbrack) $ contains $  S_u^{'} $ and each vertex in $ V(G_{\alpha,\beta}\lbrack S_u,S_v \rbrack) $ contains  $ S_v^{'} $ as well. Also, note that $ q\in G_{\alpha,\beta}\lbrack S_u,S_v \rbrack $. These two properties imply that there exists one subgraph of $ G $, namely $ G_{\alpha,\beta}\lbrack S_u,S_v \rbrack $, with each vertex in $ U(G) $ has degree being at least $ \alpha $ and each vertex in $ V(G) $ has degree being at least $ \beta $, such that it contains $ q $ and every vertex in its upper(lower) layer contains $ S_u^{'}(S_u^{'}) $. It follows that there exists such a subgraph with maximal size (i.e.,$ G_{\alpha,\beta}\lbrack S_u^{'},S_v^{'}\rbrack $).
\end{proof}

\begin{lemma}\label{lem2}
Given two groups of keyword sets $ \{S_{u1},S_{v1}\} $ and $ \{S_{u2},S_{v2}\} $, if $ G_{\alpha,\beta}\lbrack S_{u1},S_{v1}\rbrack $ and $ G_{\alpha,\beta}\lbrack S_{u2},S_{v2}\rbrack $ exist, we have $ G_{\alpha,\beta}$ $\lbrack S_{u1\cup u2},S_{v1\cup v2}\rbrack\subseteq G_{\alpha,\beta}\lbrack S_{u1},S_{v1}\rbrack\cap G_{\alpha,\beta}\lbrack S_{u2},S_{v2}\rbrack $.
\end{lemma}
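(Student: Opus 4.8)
The plan is to prove the containment by a direct membership argument, unpacking the definition of $G_{\alpha,\beta}[S_u,S_v]$ as the \emph{largest} connected subgraph meeting the keyword and degree constraints and containing $q$. First I would observe that any vertex lying in $G_{\alpha,\beta}[S_{u1\cup u2},S_{v1\cup v2}]$ has an upper-layer keyword set containing $S_{u1}\cup S_{u2}$, hence in particular containing $S_{u1}$ (and symmetrically $S_{v1}$ for lower-layer vertices), and the same subgraph is connected, contains $q$, and satisfies the degree bounds $\deg(u,\cdot)\geq\alpha$, $\deg(v,\cdot)\geq\beta$. Therefore $G_{\alpha,\beta}[S_{u1\cup u2},S_{v1\cup v2}]$ is itself one of the candidate subgraphs over which the maximum defining $G_{\alpha,\beta}[S_{u1},S_{v1}]$ is taken, so by maximality it is contained in $G_{\alpha,\beta}[S_{u1},S_{v1}]$. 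Repeating the identical argument with the pair $\{S_{u2},S_{v2}\}$ gives containment in $G_{\alpha,\beta}[S_{u2},S_{v2}]$, and intersecting the two conclusions yields the claim.

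A subtlety I want to be careful about is the exact sense in which $G_{\alpha,\beta}[S_u,S_v]$ is "largest": the degree constraints are not monotone under taking subgraphs, so it is not immediately obvious that there is a unique maximal such subgraph, nor that "being a candidate" implies "being contained in the maximum". The cleanest way around this is to use Lemma~\ref{lem1} as a black box: since $S_{u1}\subseteq S_{u1\cup u2}$ and $S_{v1}\subseteq S_{v1\cup v2}$, Lemma~\ref{lem1} already guarantees $G_{\alpha,\beta}[S_{u1},S_{v1}]\supseteq G_{\alpha,\beta}[S_{u1\cup u2},S_{v1\cup v2}]$ directly, and likewise with subscript $2$. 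So the proof reduces to two invocations of the previous lemma plus the trivial set-theoretic fact that $A\subseteq B$ and $A\subseteq C$ imply $A\subseteq B\cap C$.

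The main obstacle — really the only one — is making sure the existence hypotheses line up: Lemma~\ref{lem1} is phrased as "if $G_{\alpha,\beta}[S_u,S_v]$ exists then $G_{\alpha,\beta}[S_u',S_v']$ exists and contains it", so to apply it I need $G_{\alpha,\beta}[S_{u1\cup u2},S_{v1\cup v2}]$ to exist in the first place; otherwise the left-hand side of the asserted inclusion is vacuous and the statement holds trivially. I would handle this by a one-line case split: if $G_{\alpha,\beta}[S_{u1\cup u2},S_{v1\cup v2}]$ does not exist, there is nothing to prove; if it does exist, apply Lemma~\ref{lem1} twice (once from $\{S_{u1\cup u2},S_{v1\cup v2}\}$ down to $\{S_{u1},S_{v1}\}$, once down to $\{S_{u2},S_{v2}\}$), then intersect. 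I do not expect any computation; the whole argument is three or four sentences once the definitions and Lemma~\ref{lem1} are in hand.
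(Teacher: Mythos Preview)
Your proposal is correct and follows essentially the same route as the paper: apply Lemma~\ref{lem1} twice (from the union keyword sets down to $\{S_{u1},S_{v1}\}$ and to $\{S_{u2},S_{v2}\}$) and then intersect. Your explicit handling of the vacuous case and of which existence hypothesis is actually needed to invoke Lemma~\ref{lem1} is, if anything, slightly more careful than the paper's own one-paragraph proof.
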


\begin{proof} Based on Lemma 1, since $ \{S_{u1},S_{v1}\}\subseteq \{S_{u1\cup u2},S_{v1\cup v2}\}$ and $ G_{\alpha,\beta}\lbrack S_{u1},S_{v1}\rbrack $ exsits, we have $ G_{\alpha,\beta}\lbrack S_{u1\cup u2},S_{v1\cup v2}\rbrack\subseteq G_{\alpha,\beta}\lbrack S_{u1},$ $S_{v1}\rbrack$. For the same reason, we have $ G_{\alpha,\beta}\lbrack S_{u1\cup u2},S_{v1\cup v2}\rbrack\subseteq G_{\alpha,\beta}\lbrack S_{u2},$ $S_{v2}\rbrack$. It directly follows the lemma.
\end{proof}

This lemma implies, if $ \{S_u^{'},S_v^{'}\} $ is generated from $ \{S_{u1},S_{v1}\} $ and $ \{S_{u2},S_{v2}\} $, we can find $ G_{\alpha,\beta}\lbrack S_u^{'},S_v^{'}\rbrack $ from $ G_{\alpha,\beta}\lbrack S_{u1},S_{v1}\rbrack\cap G_{\alpha,\beta}\lbrack S_{u2},$ $S_{v2}\rbrack $ directly. Since every vertex in $ G_{\alpha,\beta}\lbrack S_{u1},S_{v1}\rbrack\cap G_{\alpha,\beta}\lbrack S_{u2},S_{v2}\rbrack $ contains both  $ \{S_{u1},S_{v1}\} $ and $ \{S_{u2},S_{v2}\} $, we do not need to consider the keyword constraint again when finding $ G_{\alpha,\beta}\lbrack S_u^{'},S_v^{'}\rbrack $.

In addition, considering the degree constraint of  $ G_{\alpha,\beta}\lbrack S_u^{'},S_v^{'}\rbrack $, there is a key observation that, if $ {S_u^{'},S_v^{'}} $ is a qualified keyword set, then there are at least $ \beta $ vextices in $ U(G_{\alpha,\beta}\lbrack S_u^{'},S_v^{'}\rbrack) $ containing set $ S_u^{'} $ and $ \alpha $ vertices in $ N(q) $ containing set $ S_v^{'} $. This observation implies, we can generate all the candidate keyword sets directly by using the query vertex $q$ and $q^{'} $ neighbors, without touching other vertices.

Based on above lemmas and observation, we introduce the algorithm $ Inc $. Compared with $ Basic $, it  shrinks the initial candidate keyword sets and can always verify the existence of $ G_{(\alpha,\beta)}\lbrack\psi^{'},\varphi^{'}\rbrack $ within a subgraph of G instead of the entire graph $ G $ , and thus the subgraph for such verification shrinks when the candidate set $ {\psi^{'},\varphi^{'}} $ expands. Therefore, a large sum of redundant computation is reduced during the verification process.

\begin{algorithm}
\caption{Inc}\label{algorithm}
      Initialize $ \psi $ using $ S $, $ \varphi $ using $ N(q) $\;
      generate $ P\{P_1,P_2,...,P_i\} $ and $ Q\{Q_1,Q_2,...,Q_j\} $ by $ \psi $ and $ \varphi $\;
      update $ \psi, \varphi, P, Q $\;
      $ c\leftarrow\varnothing, \phi_l\leftarrow\varnothing, l\leftarrow0$\;
      \For{$ \psi_i\in\psi $}{
          \For{$ \varphi_j\in\varphi $}{
               Compute $ G_{(\alpha,\beta)}\lbrack\psi_i,\varphi_j\rbrack $ from the subgraph induced on $ P_i $ and $ Q_j $\;
               \If{$ G_{(\alpha,\beta)}\lbrack\psi_i,\varphi_j\rbrack $ exists}{
                $ c\leftarrow\{\psi_i,\varphi_j\} $\;
                $ \phi_l.add(<c,G_{\alpha,\beta}\lbrack c\rbrack>) $\;
                }
           }
      }
      \While{$ \phi_l\neq\varnothing $}{
            \For{$ <c_1,G_{\alpha,\beta}\lbrack c_1\rbrack>\in \phi_l $}{
                 \For{$ <c_2,G_{\alpha,\beta}\lbrack c_2\rbrack>\in \phi_l $}{
                      $ G\lbrack c_1\cup c_2\rbrack\leftarrow G_{\alpha,\beta}\lbrack c_1\rbrack\cap G_{\alpha,\beta}\lbrack c_2\rbrack $\;
                      Compute $ G_{(\alpha,\beta)}\lbrack c_1\cup c_2\rbrack $ from $ G\lbrack c_1\cup c_2\rbrack $\;
                      \If{$ G_{\alpha,\beta}\lbrack c_1\cup c_2\rbrack $ exists}{
                          $ \phi_{l+1}.add(<c_1\cup c_2,G_{\alpha,\beta}\lbrack c_1\cup c_2\rbrack>) $\;
                          $ l\leftarrow l+1 $\;
                       }
                  }
   
             }
      }
      find $ c $ when $ \vert c\vert $ is maximum from $ \phi_0 $ to $ \phi_{l-1} $\;
      output $ G_{\alpha,\beta}\lbrack c\rbrack $\;
\end{algorithm}

Algorithm 3 presents $Inc$. First it initializes a set, $ \psi\{\psi_1,\psi_2,...,\psi_i\} $, of candidate keyword sets with each being a keyword of $ S $. Then for each $ v\in N(q) $, it puts each keyword in $ W_V(v) $ into $ S_V $ and initializes a set, $ \varphi\{\varphi_1,\varphi_2,...,\varphi_j\} $, of candidate keyword sets with each being a keyword of $ S_V $ (line 1). For each candidate keyword set $ \psi_i(\varphi_j) $ in $ \psi(\varphi) $, it traverse $ G $ and put nodes containing $ \psi_i(\varphi_j) $ into $ P_i(Q_j) $ (line 2). Considering the key observation that, if $ \psi_i(\varphi_j) $ is a qualified keyword set, then there are at least $ \beta $ nodes in $ U(G)$ containing $ \psi_i $ and $ \alpha $ nodes in $ V(G)$ containing $ \varphi_j $, so it removes $ P_i $ and $ \psi_i $ if $ \vert P_i\vert<\beta $, and removes $ Q_j $ and $ \varphi_j $ if $ \vert Q_j\vert<\alpha $ as well (line 3). Then, we set $l = 0 $, indicating the sizes of current keyword sets, and initialize a set $ \phi $ of $ <c , G_{\alpha,\beta}\lbrack c\rbrack> $ pairs. In a $ <c ,G_{\alpha,\beta}\lbrack c\rbrack> $ pair, $c$ contains a set, $\psi^{'}$, of keywords from $ \psi$ and a set, $\varphi^{'}$, of keywords from $\varphi $, and $ G_{\alpha,\beta}\lbrack c\rbrack $ is an $ (\alpha,\beta)$-community of $ G $ where each vertex in $ U(G_{\alpha,\beta}\lbrack c\rbrack) $ contains $\psi^{'}$ and each vertex in $ V(G_{\alpha,\beta}\lbrack c\rbrack) $ contains $\varphi^{'}$ (line 4). $ \forall\psi^{'}\in\psi $ and $ \forall\varphi^{'}\in\varphi $, we verify the existence of $ G_{(\alpha,\beta)}\lbrack\psi^{'},\varphi^{'}\rbrack $ and put the qualified $ <c ,G_{\alpha,\beta}\lbrack c\rbrack> $ pairs into $ \phi_l $ (line 5-10). In the while loop (lines 11–18), for every two $ <c ,G_{\alpha,\beta}\lbrack c\rbrack> $ pairs, denoted as $ <c_1 ,G_{\alpha,\beta}\lbrack c_2\rbrack> $ and $ <c_2 ,G_{\alpha,\beta}\lbrack c_2\rbrack> $ in $ \phi_l $, we find $ G_{(\alpha,\beta)}\lbrack c_1\cup c_2\rbrack $ from $ G\lbrack c_1\cup c_2\rbrack $, the shared subgraph of $ G_{\alpha,\beta}\lbrack c_2\rbrack $ and $ G_{\alpha,\beta}\lbrack c_2\rbrack $ (line 12-15). If $ G_{(\alpha,\beta)}\lbrack c_1\cup c_2\rbrack $ exists, we put the pair of $ c_1\cup c_2 $ and $ G_{(\alpha,\beta)}\lbrack c_1\cup c_2\rbrack $ into the set $ \phi_{l+1} $ (line 16-17). When $\phi_l $ is empty, we stop the loop. Next, we  look for the qualified keyword sets $ c $, which contain the most keywords, from $\phi_0 $ to $\phi_{l-1}$. Finally, we output the communities of keyword sets $ c $.

\begin{theorem}\label{thm3}
Given a bipartite graph G, $Inc$ computes $ G_{\alpha,\beta}\lbrack c\rbrack $ in $O((\vert S\vert+\vert S_V\vert-1)\cdot \vert S\vert\cdot\vert S_V\vert(\vert S\vert\cdot\vert S_V\vert +O(Compute\ G_{\alpha,\beta}(q,G))))$.
\end{theorem}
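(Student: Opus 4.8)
The plan is to charge the running time of Algorithm~3 phase by phase --- (i) the initialization in lines~1--4, (ii) the base-level verification loop in lines~5--10, (iii) the incremental merging loop in lines~11--18, and (iv) the final extraction in lines~19--20 --- and then isolate the dominant term so that it collapses to the stated bound.

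For initialization, $\psi$ holds at most $|S|$ singleton keyword sets and $\varphi$ at most $|S_V|$ singletons, where $S_V$ denotes the union of the keyword sets over the neighbors of $q$; constructing these and, for each singleton, scanning $G$ once to build the vertex lists $P_i$ (vertices of $U$ holding $\psi_i$) and $Q_j$ (vertices of $V$ holding $\varphi_j$) costs $O((|S|+|S_V|)(n+m))$, and the prune in line~3, which drops every $P_i$ with $|P_i|<\beta$ and every $Q_j$ with $|Q_j|<\alpha$, adds only $O(|S|+|S_V|)$; I expect to absorb this whole phase into the $O(\mathrm{Compute}\ G_{\alpha,\beta}(q,G))$ term. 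For the base loop, lines~5--10 iterate over the $|\psi|\cdot|\varphi|\le|S|\cdot|S_V|$ pairs $(\psi_i,\varphi_j)$, and each call $\mathrm{Compute}\ G_{(\alpha,\beta)}[\psi_i,\varphi_j]$ runs on the subgraph induced by $P_i\cup Q_j$, a subgraph of $G$, so by Theorem~1 it costs $O(\mathrm{Compute}\ G_{\alpha,\beta}(q,G))$, giving $O(|S|\cdot|S_V|\cdot \mathrm{Compute}\ G_{\alpha,\beta}(q,G))$ for this phase.

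The heart of the argument is the merging loop. The key structural fact is that every stored pair $\langle c, G_{\alpha,\beta}[c]\rangle$ carries a combined keyword set $c=\psi'\cup\varphi'$ with $|\psi'|\le|S|$ and $|\varphi'|\le|S_V|$, so $2\le|c|\le|S|+|S_V|$; since each merge of $c_1$ and $c_2$ produces a strictly larger set, the number of distinct levels $\phi_l$ ever processed --- hence the number of while-loop iterations --- is at most $|S|+|S_V|-1$. Within one level the doubly-nested loop ranges over the current contents of $\phi_l$, and here I would argue $|\phi_l|=O(|S|\cdot|S_V|)$, appealing to Lemma~1 (anti-monotonicity) together with the line~3 pruning so that only qualified keyword sets reachable from the surviving base singletons enter $\phi_l$; thus each level performs $O(|S|\cdot|S_V|)$ merge-and-verify steps. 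Each step intersects two stored communities --- which I would charge as $O(|S|\cdot|S_V|)$ by bounding the relevant subgraphs through the keyword-list sizes, exactly as Lemma~2 lets us do without re-checking keyword constraints --- and then runs $\mathrm{Compute}$ on the resulting subgraph of $G$ at cost $O(\mathrm{Compute}\ G_{\alpha,\beta}(q,G))$. Multiplying levels by per-level work yields $O((|S|+|S_V|-1)\cdot|S|\cdot|S_V|\cdot(|S|\cdot|S_V|+O(\mathrm{Compute}\ G_{\alpha,\beta}(q,G))))$, which dominates phases (i) and (ii) as well as the final linear scan over $\phi_0,\dots,\phi_{l-1}$ in line~19, establishing the claimed bound.

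The main obstacle is precisely the per-level count $|\phi_l|$: counting directly the surviving keyword-set pairs of a fixed combined size is not obviously $O(|S|\cdot|S_V|)$, so the argument must lean on the line~3 pruning and Lemmas~1 and~2 to keep the bookkeeping polynomial in $|S|$ and $|S_V|$; the secondary technical point is charging the subgraph intersection consistently with the $|S|\cdot|S_V|$ factor rather than with the size of $G$.
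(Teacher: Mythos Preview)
Your phase-by-phase decomposition matches the paper's proof exactly: the paper also bounds lines~1--3 separately (with slightly different constants involving $d$ and $|S_v|_{\max}$ rather than $n+m$, but these are absorbed anyway), charges lines~5--10 as $O(|S|\cdot|S_V|\cdot O(\mathrm{Compute}\ G_{\alpha,\beta}(q,G)))$, and then asserts that the while loop runs at most $|S|+|S_V|-1$ times with each iteration costing $O(|S|\cdot|S_V|(|S|\cdot|S_V|+O(\mathrm{Compute}\ G_{\alpha,\beta}(q,G))))$.

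Where you differ is that you correctly flag the per-level bound $|\phi_l|=O(|S|\cdot|S_V|)$ as the crux and try to justify it via anti-monotonicity and the line~3 prune; the paper does not justify this at all --- it simply states the per-iteration cost without explaining why the doubly nested loop over $\phi_l\times\phi_l$ is bounded by $|S|\cdot|S_V|$ pairs rather than $|\phi_l|^2$. So your concern about this step is well placed, but it is not a gap relative to the paper: the paper's argument is at least as loose on exactly this point, and your proposal is otherwise a faithful (indeed more careful) rendering of the same analysis.
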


\begin{proof}
In Algorithm 3, we use $d$ to denote the degree of $q$ and $\vert S_v\vert_{max}$ to represent the $ W_V(v) $ of largest size among all $v\in N(q)$, lines 1 can be completed in $O(\vert S\vert+d\cdot\vert S_v\vert_{max}\log(d\cdot\vert S_v\vert_{max}))$ time. Line 2-3 can be completed in $O(n_u\cdot\vert S\vert+n_v\cdot d\cdot\vert S_v\vert_{max}\log(d\cdot2^{\vert S_v\vert_{max}}))$ time. Line 5-10 can be completed in $O(\vert S\vert\cdot\vert S_V\vert\cdot O(Compute$ $\ G_{\alpha,\beta}(q,G)))$ time. In while loop, each time it takes $O(\vert S\vert\cdot\vert S_V\vert(\vert S\vert\cdot\vert S_V\vert +O(Compute\ G_{\alpha,\beta}(q,G))))$ time to find qualified communities and put them into a new set $ \phi_{l+1}$, in the worst case, it runs $(\vert S\vert+\vert S_V\vert-1)$ times.
\end{proof}

\begin{example} Considering $G$ in Fig.2(a), let $q$=$A$, $\alpha $=2, $ \beta$=2 and $ S $=$\{a,b,c\}$, Fig.5(a) shows a (2,2)-core of $G$. By Algorithm 3, we first find set of keyword sets $ \psi\{\{a\},\{b\},\{c\}\} $, $ \varphi\{\{w\},\{x\},\{y\},\{z\}\} $ and then verify that $ G_{2,2}\lbrack \{b\},\{x\} \rbrack $, $ G_{2,2}\lbrack \{b\},\{y\} \rbrack $, $ G_{2,2}\lbrack \{c\},\{x\} \rbrack $ and $ G_{2,2}\lbrack \{c\},\{y\} \rbrack $ exists as Fig.5(b) and Fig.5(c) show. In the first while loop, we choose 2 qualified keyword sets from $\{\{b,x\},$ $ \{b,y\}, \{c,x\},$ $\{c,y\}\}$ and get their union set (e.t.$ \{bc,xy\}\ from\ \{b,x\}\ and \ \{c,y\} $). By Lemma 2, we only need to verify the new candidate keyword set under nodes in $ G_{2,2}\lbrack \{b\},\{x\} \rbrack $ and $ G_{2,2}\lbrack \{c\},\{y\} \rbrack $. Fig.5(d) shows the final attributed community	$ G_{2,2}\lbrack \{b,c\},\{x,y\}\rbrack $.
\end{example}

\begin{figure}[H]\label{fig}
\centering
\includegraphics[width=85mm]{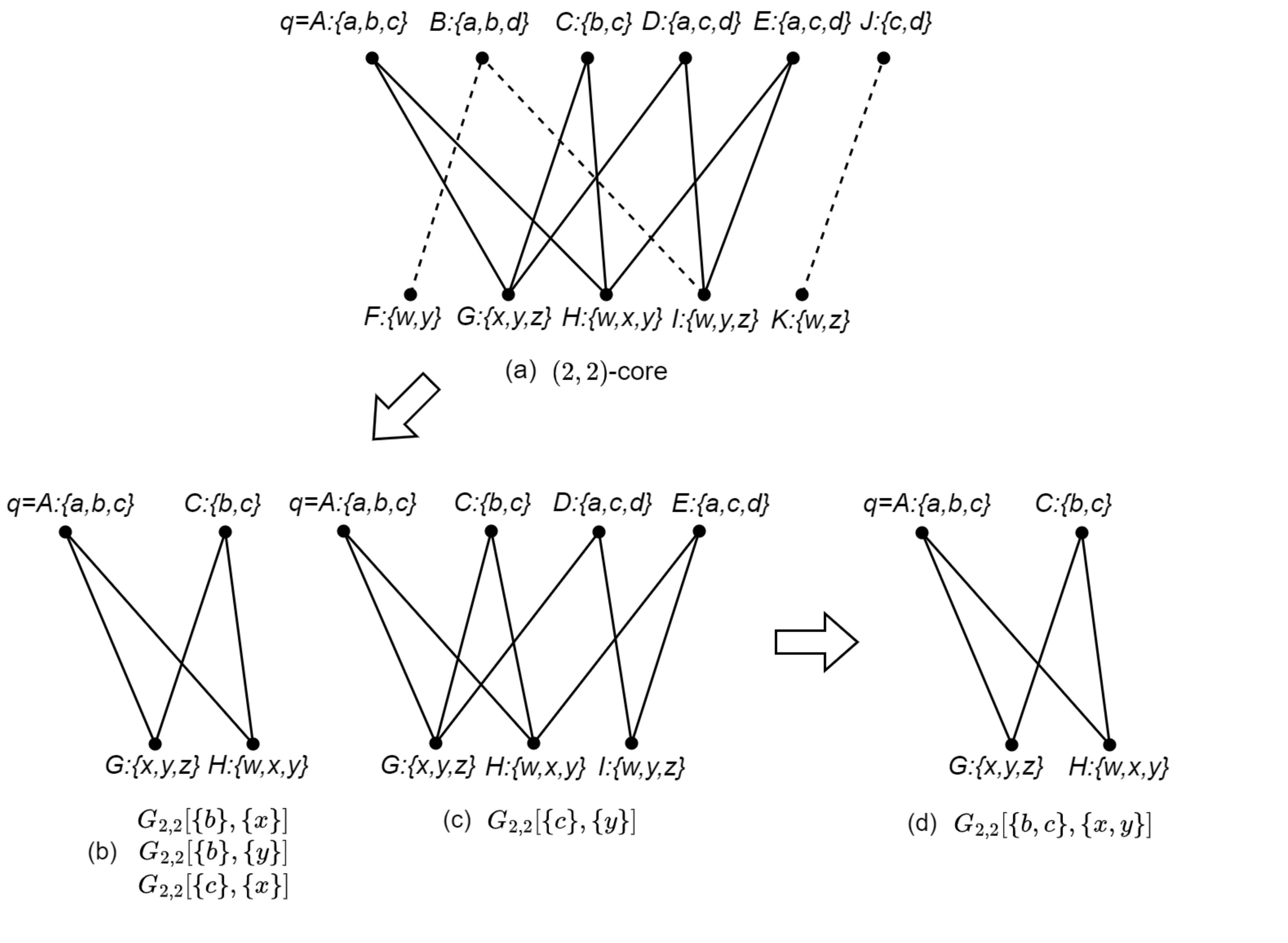}
\vspace{-0.5cm}
\caption{An example of finding $(2,2)$-community by $ Inc $ method}
\vspace{-0.5cm}
\end{figure}

\subsection{The Decremental Algorithm}
The decremental algorithm, denoted by $ Dec $, differs from the incremental algorithm on both the generation and verification of candidate keyword sets.

\subsubsection{Generation of candidate keyword sets}

\begin{lemma}\label{lem3}
Given a vertex set V of $q^{'}$s neighbors, a qualified keyword set $S_u$ and a set $S_V$ containing all nonempty subsets of $W_V(v)$. For each $S_v \in S_V $, if less than $\alpha$ vertices in V containing $S_v$, we have  $ G_{\alpha,\beta}\lbrack S_u,S_v \rbrack$ does't exist.
\end{lemma}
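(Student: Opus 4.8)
The plan is to argue by contradiction, leveraging the structural role of the query vertex $q$ inside any subgraph of the form $G_{\alpha,\beta}\lbrack S_u,S_v\rbrack$. Suppose, for contradiction, that $G_{\alpha,\beta}\lbrack S_u,S_v\rbrack$ does exist for some $S_v\in S_V$ with the property that strictly fewer than $\alpha$ vertices of $V$ (the neighbor set $N(q)$) contain $S_v$. By the defining constraints of $G_{\alpha,\beta}\lbrack S_u,S_v\rbrack$, the query vertex $q$ belongs to it, $q\in U(G_{\alpha,\beta}\lbrack S_u,S_v\rbrack)$, and every vertex $w$ in the lower layer $V(G_{\alpha,\beta}\lbrack S_u,S_v\rbrack)$ satisfies $S_v\subseteq W_V(w)$. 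In particular, all neighbors of $q$ that survive inside this subgraph must contain $S_v$, hence they form a subset of the at-most-$(\alpha-1)$ vertices of $N(q)$ that contain $S_v$.

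The key step is then the degree bookkeeping at $q$: since $q\in U(G_{\alpha,\beta}\lbrack S_u,S_v\rbrack)$, the structure cohesiveness requires $\deg(q,G_{\alpha,\beta}\lbrack S_u,S_v\rbrack)\ge\alpha$, i.e. $q$ has at least $\alpha$ neighbors inside the subgraph. But $N(q,G_{\alpha,\beta}\lbrack S_u,S_v\rbrack)\subseteq\{v\in N(q): S_v\subseteq W_V(v)\}$, and by hypothesis the right-hand set has size at most $\alpha-1$. This yields $\alpha\le\deg(q,G_{\alpha,\beta}\lbrack S_u,S_v\rbrack)\le\alpha-1$, a contradiction. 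Hence no such subgraph can exist, proving the lemma.

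I expect the only subtle point — not so much an obstacle as a thing to state carefully — is the containment $N(q,G_{\alpha,\beta}\lbrack S_u,S_v\rbrack)\subseteq V\cap\{w: S_v\subseteq W_V(w)\}$: a neighbor of $q$ inside the subgraph is by construction a vertex of the lower layer $V(G_{\alpha,\beta}\lbrack S_u,S_v\rbrack)$, so it inherits the keyword constraint $S_v\subseteq W_V(\cdot)$, and it is of course also a neighbor of $q$ in $G$, i.e. an element of $V=N(q)$. Once that containment is written down, the counting argument is immediate and the proof is essentially one line. No heavy machinery is needed; it is purely a consequence of Definition~\ref{def4} (structure cohesiveness at $q$) together with the keyword restriction built into the notation $G_{\alpha,\beta}\lbrack S_u,S_v\rbrack$.
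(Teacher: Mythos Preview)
Your proof is correct and follows essentially the same contradiction argument as the paper: both hinge on the fact that $q\in U(G_{\alpha,\beta}\lbrack S_u,S_v\rbrack)$ forces $\deg(q,G_{\alpha,\beta}\lbrack S_u,S_v\rbrack)\ge\alpha$, so at least $\alpha$ neighbors of $q$ must contain $S_v$, contradicting the hypothesis. Your version simply spells out the containment $N(q,G_{\alpha,\beta}\lbrack S_u,S_v\rbrack)\subseteq\{v\in N(q): S_v\subseteq W_V(v)\}$ more carefully than the paper's terse one-line sketch.
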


\begin{proof}
Assume that $\{S_u,S_v\}$ is a qualified keyword set, then there are at least $\beta$ vertices in $ U( G_{\alpha,\beta}\lbrack S_u,S_v \rbrack)$ containing $S_u$ and $\alpha$ vertices of  $q^{'}s$ neighbors containing $S_v$. This contradicts the condition that less than $\alpha$ vertices in $V$ contains $S_v$, so lemma 3 is proved.
\end{proof}

We generate the candidate keyword sets, $\psi $, of $ U(G) $ by enumerating all nonempty subsets of $ S_U $. For each vertex $ v\in N(q) $, we enumerate all nonempty subsets of $ W_V(v) $ and put them into a new set $ \varphi $, the elements of which are different from each other. Then we update the candidate keyword sets by removing those contained by less than $\alpha $ of $ q^{'} $ neighbors.

\begin{example} Consider a query vertex Q($ \alpha $ = 3)with 5 neighbors in Fig.6(a), where the selected keywords of each vertex are listed in the curly braces. For each neighbor of Q, all nonempty subsets of its keyword sets are generated, as shown in Fig.6(b). We can easily filter out the subset which occurs equal to or more than three times and form the set $ \varphi\{\{x\},\{y\},\{z\},\{x,y\}\} $.	
\end{example}

\begin{figure}[H]\label{fig}
\centering
\includegraphics[width=80mm]{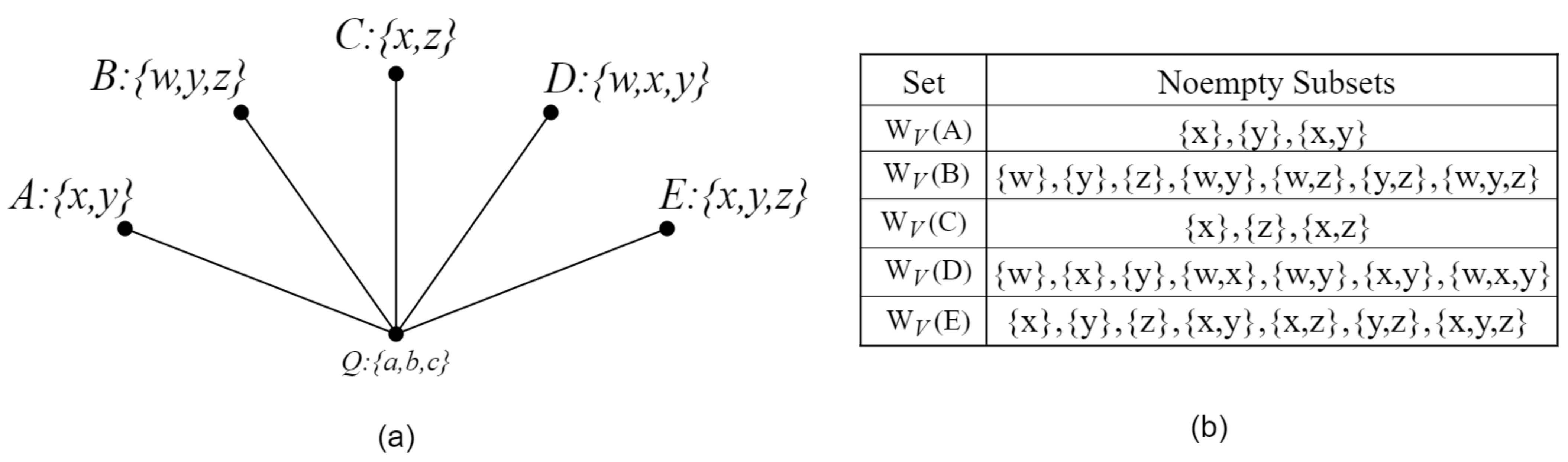}
\vspace{-0.2cm}
\caption{An example of candidate generation in $Dec$ method}
 \vspace{-0.5cm}
\end{figure}

\subsubsection{Verification of candidate keyword sets} 
As candidates can be obtained using $ S $ and $q^{'} $ neighbors directly, we can verify them in a decremental manner (larger candidate keyword sets first and smaller candidate keyword sets later). During the verification process, once finding the attribute $ (\alpha,\beta)$-communities for candidate keyword sets of the same size, $ Dec $ does not need to verify smaller candidate keyword sets. Therefore, compared with the incremental algorithm, $ Dec $ can save the cost of verifying smaller candidate keywords, thus it may be faster practically.

\begin{algorithm}
\caption{Dec}\label{algorithm}
      Initialize $ \psi $ using $ S $, $ \varphi $ using $ N(q) $\;
      create $ P_1,P_2,...,P_i $ and $ Q_1,Q_2,...,Q_j $ by $ \psi $ and $ \varphi $\;
      update $ \psi, \varphi, P, Q $\;
      $ c\leftarrow\varnothing, S\leftarrow\varnothing, ans\leftarrow\varnothing, max\leftarrow0$\;
       \For{$ \psi_i\in\psi $}{
          \For{$ \varphi_j\in\varphi $}{
                $ c\leftarrow\{\psi_i,\varphi_j\} $\;
                $ S.add(c) $\;
           }
      }
      sort $ S\{S_1,S_2,...,S_{i\times j}\} $ in descending order\;
      \For{$ S_k\in S $}{
            \If{$ \vert S_k\vert<max $}{break\;}
            \Else{
            compute $ G_{(\alpha,\beta)}\lbrack S_k\rbrack $ from the subgraph induced on $ P_i $ and $ Q_j $\;
                  \If{$ G_{(\alpha,\beta)}\lbrack S_k\rbrack $ exists}{
                      $ ans.add(G_{(\alpha,\beta)}\lbrack S_k\rbrack)$\;
                      $ max\leftarrow(\vert S_k\vert) $\;
                      }
             }
      }
return ans
\end{algorithm}

Based on the above discussions, we design $Dec$ as shown in Algorithm 4. We first generate candidate keyword sets $ \psi $ and $ \varphi $ respectively using $ S $ and $ q^{'} $ neighbors, $ P_i $ denote the set of nodes containing $ \psi_i $ and $ Q_j $ denote the set of nodes containing $ \varphi_j $ (line 1-2). Next, we update $ \psi, \varphi, P, Q $ through removing the vertex sets and the corresponding keyword sets that dissatisfy structure cohesiveness constraint (line 3). Then, we set $ max = 0 $, indicating the maximal size of all candidate keyword sets, and initialize  set $ S $ and $ c $, where $ S $ contains $ c $ and $ c $ denotes a set consisting of a keyword set, $\psi^{'}$,from $ \psi $ and a keyword set, $\varphi^{'}$, from $\varphi $ (line 4). $ \forall\psi^{'}\in\varphi $ and $ \forall\varphi^{'}\in\varphi $, we generate ($ \vert\psi\vert\times\vert\varphi\vert $) $c$ and put them into $ S $ (line 5-8). For each subset of $S$, we sort it in descending order according to the number of elements in it (line 9). After that, while $ S_k\in S $ and $ \vert S_k\vert>max $, we verify the existence of $ G_{(\alpha,\beta)}\lbrack S_k\rbrack $ in order. If $ G_{(\alpha,\beta)}\lbrack S_k\rbrack $ exists, we put it into the set $ans$ and replace $max$ by $\vert S_k\vert$.For the rest set in $ S $, when we find a set with less than $max$ elements, we stop the verification and output the desired $(\alpha,\beta)-$communities in $ans$.

\begin{theorem}\label{thm4}
Given a bipartite graph G, $Dec$ computes $ G_{\alpha,\beta}\lbrack S_k\rbrack $ in $O((2^{\vert S\vert}\cdot d\cdot2^{\vert S_v\vert_{max}})\cdot O(compute\ G_{\alpha,\beta}(q,G))+n_v\cdot d\cdot2^{\vert S_v\vert_{max}}\log(d\cdot2^{\vert S_v\vert_{max}}))$.
\end{theorem}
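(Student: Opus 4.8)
The plan is to bound Algorithm~4 line by line, in the same style as the proofs of Theorems~\ref{thm2} and~\ref{thm3}, and then keep only the dominant terms. Throughout, write $d=deg(q,G)$ and $\vert S_v\vert_{max}=\max_{v\in N(q)}\vert W_V(v)\vert$. First I would analyse the preprocessing in lines~1--3: enumerating every nonempty subset of $S$ yields $\vert\psi\vert\le 2^{\vert S\vert}-1$ candidate sets for $U(G)$, while for $\varphi$ each of the $d$ neighbours of $q$ contributes at most $2^{\vert S_v\vert_{max}}-1$ subsets, so merging these lists and discarding duplicates (by sorting or hashing keyword sets) costs $O(d\cdot 2^{\vert S_v\vert_{max}}\log(d\cdot 2^{\vert S_v\vert_{max}}))$. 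Building $P_1,\dots,P_i$ then scans the $n_u$ vertices of $U(G)$ against the $\vert\psi\vert$ sets, building $Q_1,\dots,Q_j$ scans the $n_v$ vertices of $V(G)$ against the $\vert\varphi\vert$ sets, and the pruning of line~3 (removing $P_i$ with $\vert P_i\vert<\beta$ and $Q_j$ with $\vert Q_j\vert<\alpha$, justified by Lemma~\ref{lem3}) is no more costly; altogether this stage runs in $O(n_u\cdot 2^{\vert S\vert}+n_v\cdot d\cdot 2^{\vert S_v\vert_{max}}\log(d\cdot 2^{\vert S_v\vert_{max}}))$.

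Next I would dispatch lines~4--9: assembling the list $S$ of at most $(2^{\vert S\vert}-1)(d\cdot 2^{\vert S_v\vert_{max}})$ pairs $c=\{\psi_i,\varphi_j\}$ takes $O(2^{\vert S\vert}\cdot d\cdot 2^{\vert S_v\vert_{max}})$ time, and sorting $S$ by size adds only a logarithmic factor; both are absorbed by the loop that follows. The core of the argument is lines~10--17. In the worst case no qualified community is ever found, so the early exit when $\vert S_k\vert<max$ never triggers and the loop body runs for all $\vert S\vert=\vert\psi\vert\cdot\vert\varphi\vert\le 2^{\vert S\vert}\cdot d\cdot 2^{\vert S_v\vert_{max}}$ candidates. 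For one candidate $S_k=\{\psi_i,\varphi_j\}$, the call computes $G_{\alpha,\beta}\lbrack S_k\rbrack$ inside the subgraph induced on $P_i\cup Q_j$, which is a subgraph of $G$; since the bound of Theorem~\ref{thm1} is monotone in the vertex counts and in the maximum degrees, this call costs at most $O(compute\ G_{\alpha,\beta}(q,G))$. Hence lines~10--17 run in $O((2^{\vert S\vert}\cdot d\cdot 2^{\vert S_v\vert_{max}})\cdot O(compute\ G_{\alpha,\beta}(q,G)))$.

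To finish, I would collect the terms. The residual $n_u\cdot 2^{\vert S\vert}$ contribution of line~2 is absorbed into the loop cost: Theorem~\ref{thm1} shows the cost of computing $G_{\alpha,\beta}(q,G)$ is at least $n_u$, and $d\cdot 2^{\vert S_v\vert_{max}}\ge 1$, so $n_u\cdot 2^{\vert S\vert}\le (2^{\vert S\vert}\cdot d\cdot 2^{\vert S_v\vert_{max}})\cdot O(compute\ G_{\alpha,\beta}(q,G))$; likewise the $O(2^{\vert S\vert}+d\cdot 2^{\vert S_v\vert_{max}}\log(d\cdot 2^{\vert S_v\vert_{max}}))$ cost of line~1 is dominated by the $n_v$-scaled term. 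What survives is exactly $O((2^{\vert S\vert}\cdot d\cdot 2^{\vert S_v\vert_{max}})\cdot O(compute\ G_{\alpha,\beta}(q,G))+n_v\cdot d\cdot 2^{\vert S_v\vert_{max}}\log(d\cdot 2^{\vert S_v\vert_{max}}))$, which is the claimed bound.

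The step I expect to be the main obstacle is making the absorption steps airtight and, in particular, justifying that confining each $G_{\alpha,\beta}\lbrack S_k\rbrack$ computation to the induced subgraph on $P_i\cup Q_j$ never costs more than running it on all of $G$; both points hinge on the exact form of Theorem~\ref{thm1} together with the trivial inequality $d\cdot 2^{\vert S_v\vert_{max}}\ge 1$, and a careless accounting here could silently drop a factor.
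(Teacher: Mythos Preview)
Your proposal is correct and follows essentially the same line-by-line accounting as the paper's own proof: both bound lines~1--3 by $O(n_u\cdot 2^{\vert S\vert}+n_v\cdot d\cdot 2^{\vert S_v\vert_{max}}\log(d\cdot 2^{\vert S_v\vert_{max}}))$, lines~5--9 by $O(2^{\vert S\vert}\cdot d\cdot 2^{\vert S_v\vert_{max}})$ plus a sorting log-factor, and the worst-case loop in lines~10--18 by $O((2^{\vert S\vert}\cdot d\cdot 2^{\vert S_v\vert_{max}})\cdot O(compute\ G_{\alpha,\beta}(q,G)))$. If anything, you are more careful than the paper in explicitly arguing why the $n_u\cdot 2^{\vert S\vert}$ and sorting terms are absorbed, whereas the paper simply lists the per-step costs and leaves the reader to take the dominant ones.
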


\begin{proof}
In Algorithm 4, we use $d$ to represent the degree of $q$, $\vert S_v\vert_{max}$ to represent the $ W_V(v) $ of largest size among all $v\in N(q)$ , we can initialize $ \psi $ and $ \varphi $ in $O(2^{\vert S\vert}+d\cdot2^{\vert S_v\vert_{max}}\log(d\cdot2^{\vert S_v\vert_{max}}))$ time. Line 2-3 can be completed in $O(n_u\cdot2^{\vert S\vert}+n_v\cdot d\cdot2^{\vert S_v\vert_{max}}\log(d\cdot2^{\vert S_v\vert_{max}}))$ time. In line 5-8, set $c$ can be generated in $O(2^{\vert S\vert}\cdot d\cdot2^{\vert S_v\vert_{max}}))$ time. Then it takes $O(2^{\vert S\vert}\cdot d\cdot2^{\vert S_v\vert_{max}}\log(2^{\vert S\vert}\cdot d\cdot2^{\vert S_v\vert_{max}}))$ sorting $S$ in descending order of the number of elements in $S$. In the worst case, it costs $O((2^{\vert S\vert}\cdot d\cdot2^{\vert S_v\vert_{max}})\cdot O(compute\ G_{\alpha,\beta}(q,G)))$ to find all qualified $ G_{(\alpha,\beta)}\lbrack S_k\rbrack $ in line 10-18. However, it will be much faster in practice.
\end{proof}
\section{Experiments}
This section presents our experimental results. We evaluate the efficiency of the techniques for retrieving attributed $(\alpha,\beta)$-communities.

\subsection{Experiments setting}
\noindent\textbf{Algorithms.} We implement and compare following algorithms: 1) a baseline algorithm $Basic$ we propose in Section 4, 2) an improved algorithm $Basic^{+}$ based on Basic,3) the improved attributed $(\alpha,\beta)$-community search algorithm $Inc$, 4) the improved attributed $(\alpha,\beta)$-community search algorithm  $Dec$ in Section 5.

\noindent\textbf{Datasets.}
We evaluate the algorithms on eight real graphs which are $ Enwikibooks$ , $Movie$, $IMDB$, $Actor$, $Discogs$, $Idwiki$, $Plwiki$ and $Nlwiki$. All the datasets we use can be found in  KONECT (http://konect.cc/networks). Note that, for the datasets without attributes, we respectively generate two different kinds of keyword sets for the vertices in the different layer of the bipartite graphs. In each experiment we randomly select 8-13 keywords (average 10) for each vertex. The summary of datasets is shown in Table 1. $U$ and $V$ are vertex layers, $\vert E\vert$ is the number of edges, and $ \widehat d $ is the average degree of vertices.

\begin{table}[th]
	\caption{Datasets used in our experiments}
		\label{tab1}
	\begin{center}
	 \resizebox{85mm}{16mm}{
		\begin{tabular}{|c|c|c|c|c|c|}
			\toprule[1.3pt]
			\textbf{ID} & \textbf{Dataset}                 & \textbf{$\left|U\right|$} & \textbf{$\left|V\right|$} & \textbf{$\left|E\right|$} & \textbf{$ \widehat d $} \\ \hline
D0          & Enwikibooks(Wikibooks edits) & 79,268  & 249,725                   & 766,272                   & 4.66                    \\ 
D1          & Movie(Actor movies)              & 127,823 & 383,640                   & 1,470,404                 & 5.75                    \\ 
D2          & IMDB(komarix-imdb)               & 685,568 & 186,414                   & 2,715,604                 & 6.23                    \\ 
D3          & Actor(actor2)                    & 303,617 & 896,302                 & 3,782,463                 & 6.30                    \\ 
D4          & Discogs(Discogs)                 & 1,754,823 & 270,771                 & 5,302,276                 & 5.24                    \\ 
D5          & Idwiki(edit-idwiki)              & 125,481 & 2,183,494                & 6,126,592                 & 5.31                    \\ 
D6          & Plwiki(edit-plwiki)              & 207,781 & 2,664,432                 & 21,219,204                & 14.78                   \\ 
D7          & Nlwiki(edit-nlwiki)              & 220,847 & 3,800,349                 & 22,142,951                & 11.01                   \\ 
			\bottomrule[1.3pt]
         \end{tabular}}
	\end{center}
\end{table}

The algorithms are implemented in C++ and the experiments are run on a machine having two tetradeca-core Intel Xeon E5-2680 v4 processor, and 251GB of memory, with Ubuntu installed. We set the maximum running time for each test to be 3 days. If a test does not stop in the time limit, we denote the corresponding processing time as INF. The code is open-sourced in https://github.com/892681347/AttributeBigraph.

\subsection{ Evaluation of retrieving attributed $(\alpha,\beta)$-community}
Here we evaluate the performance of the algorithms ($Basic$, $Basic^{+}$, $Inc$ and $Dec$) for querying attributed $(\alpha,\beta)$-communities. We set the default values of $\alpha$ and $\beta$ to 3, and the input keyword set S is set to be the full set of keywords contained in the query vertex. For each dataset, we randomly select 300 query vertices with core numbers greater than or equal to the core number we set. The value of each data is the average result of those 300 queries. For each dataset, we also randomly select $ 20\% $, $ 40\% $, $ 60\% $ and $ 80\% $ of its vertices and obtain four subgraphs induced by these vertex sets, $ 20\% $, $ 40\% $, $ 60\% $ and $ 80\% $ of its keywords and obtain four keyword sets.

The running time of $Basic$ is more than 3 days for all experiments, while the $Basic^{+}$ is unpredictable for large graphs (Idwiki, Plwiki and Nlwiki), so we record them as INF, and the effect of $Basic$ and $Basic^{+}$ algorithm will not be described separately in the corresponding experiments.

\noindent\textbf{Evaluating the effect of query parameters $ \alpha $ and $ \beta $.} We vary $\alpha$ and $\beta$ to assess the performance of these algorithms. In Fig.7(a)-7(h), $\beta$ is  fixed and the experimental parameter $\alpha$ gradually increases from 2 to 6.  We can observe that as $\alpha$ keeps increasing, the running time of $Basic^{+}$, $Inc$ and $Dec$ algorithms decreases. This is because only a few number of vertices and edges are removed from the original graph when the query parameter $\alpha$ is small. When $\alpha$ is large, the resulting $(\alpha,\beta)$-communities are much smaller than the original graph. Thus the size of subgraph directly impacts on the running time of $Basic^{+}$, $Inc$ and $Dec$ algorithms. Obviously, $Dec$ algorithm takes less time than $Basic^{+}$ and $Inc$ algorithms in any case. In Fig.8(a)-8(h), we fix $\alpha$ and vary $\beta$ to compare the query efficiency. In the experiment, we gradually increase the experimental parameter $\beta$ from 2 to 6 and the experimental results are similar to those when $\alpha$ increases. With the increase of $\beta$, the running time of $Basic^{+}$, $Inc$ and $Dec$ algorithms decreases. This is also because higher $\beta$ returns a subgraph with less vertices from the original graph, while $Basic^{+}$ and $Inc$ algorithms are easier to be affected by the number of vertices.

\begin{figure}[H]\label{fig}
\centering
\includegraphics[width=85mm]{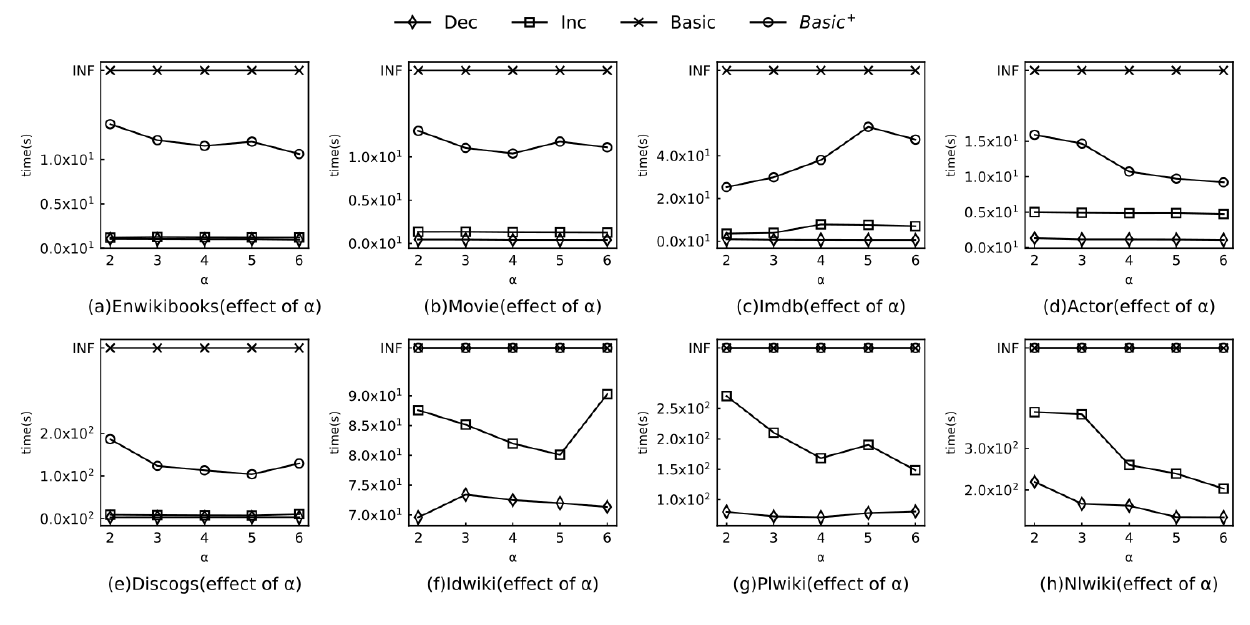}
\vspace{-0.6cm}
\caption{Effect of $ \alpha $}
\vspace{-0.2cm}
\end{figure}

\begin{figure}\label{fig}
\centering
\includegraphics[width=85mm]{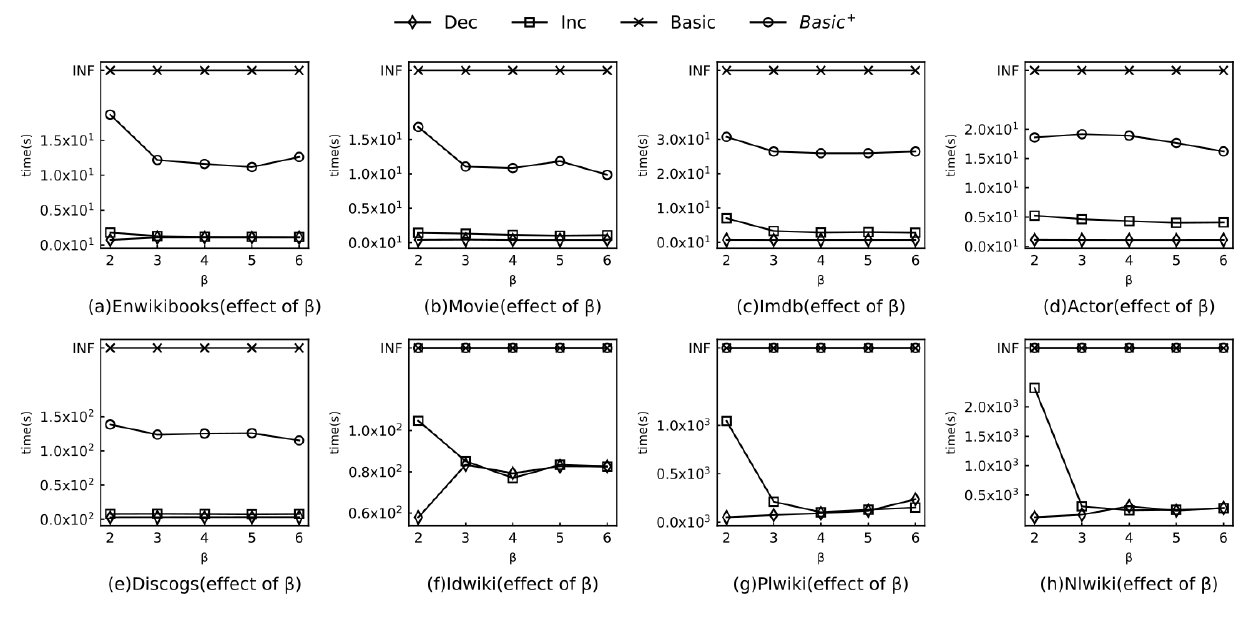}
\vspace{-0.6cm}
\caption{Effect of $ \beta $}
\vspace{-0.2cm}
\end{figure}

\noindent\textbf{Evaluating the scalability w.r.t. keyword.} In this experiment, we evaluate scalability over the fraction of keywords for each vertex. We vary the number of keywords by randomly sampling them from $20\%$ to $100\%$. As shown in Fig.9(a)-9(h), when varying the number of keywords, the running time of $Basic^{+}$, $Inc$ and $Dec$ algorithms stably increases. This is because when the number of keywords increase, the number of subgraphs derived from the keywords and the vertices and edges in each subgraph will increase accordingly. The running time of $Basic^{+}$ and $Inc$ algorithms increase faster than that of $Dec$ algorithm as more keywords are involved, which indicates that $Dec$ performs the better and has a good scalability in practice.

\begin{figure}\label{fig}
\centering
\includegraphics[width=85mm]{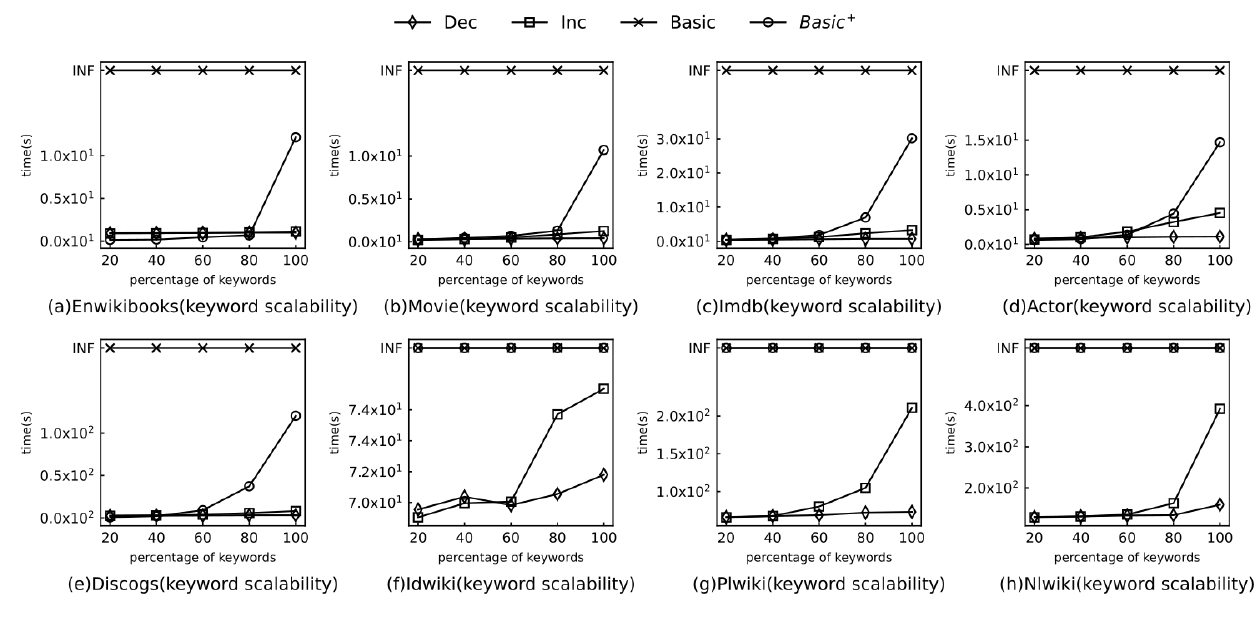}
\vspace{-0.6cm}
\caption{Scalability w.r.t. keyword}
\vspace{-0.2cm}
\end{figure}

\begin{figure}\label{fig}
\centering
\includegraphics[width=85mm]{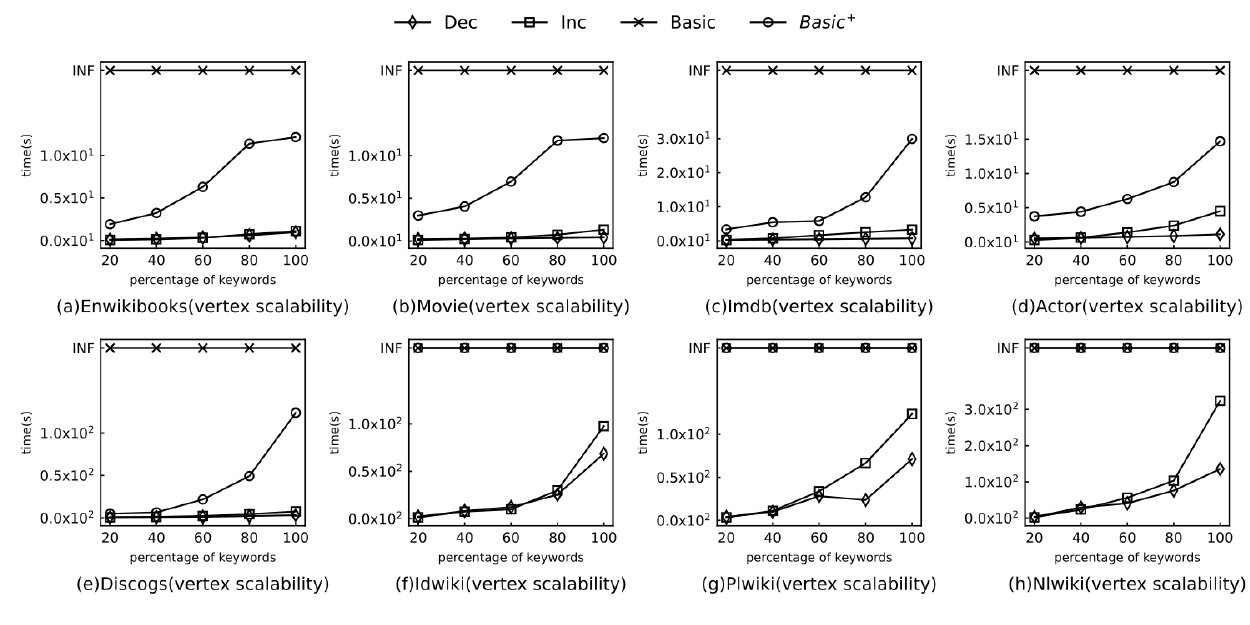}
\vspace{-0.6cm}
\caption{Scalability w.r.t. vertex}
\vspace{-0.2cm}
\end{figure}

\begin{figure}\label{fig}
\centering
\includegraphics[width=85mm]{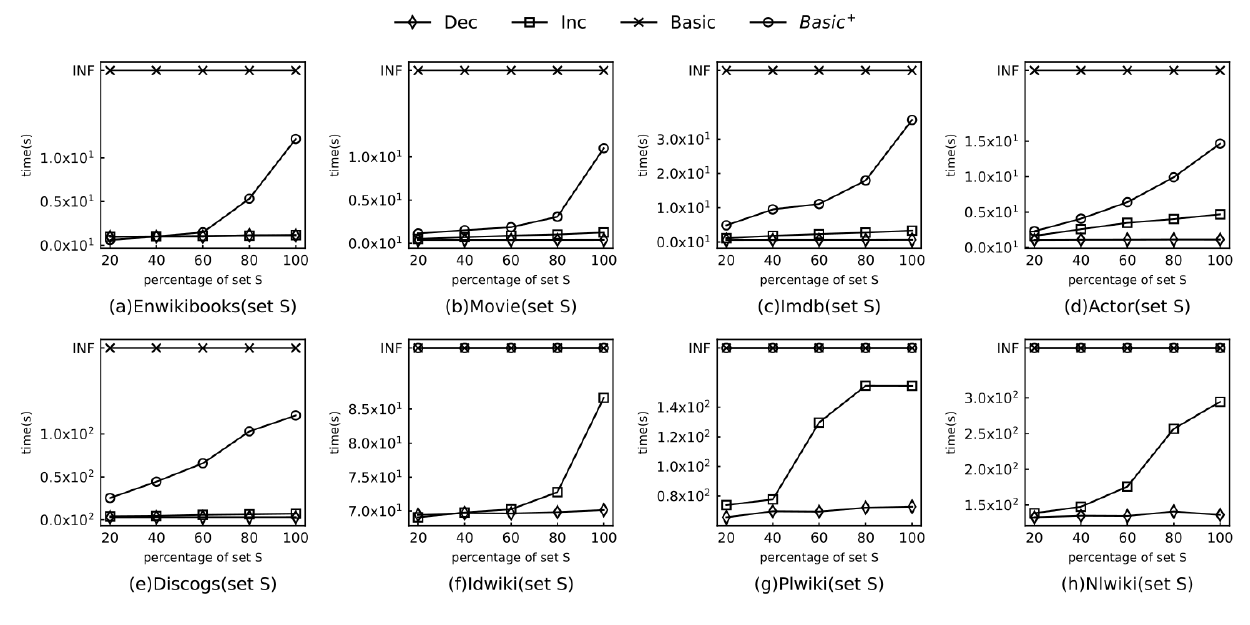}
\vspace{-0.6cm}
\caption{Effect of set S}
\vspace{-0.2cm}
\end{figure}

\noindent\textbf{Evaluating the scalability w.r.t. vertex.} In this experiment, we evaluate the scalability over different fraction of vertices. To test the scalability, we vary the number of vertices and edges by randomly sampling them respectively from $20\%$ to $100\%$ and keeping the induced subgraphs as the input graphs. All the keywords of vertices are considered. Fig.10(a)-10(h) show that, as the number of vertices increasing from $ 20\%$ to $100\%$, the running time for $Basic^{+}$, $Inc$ and $Dec$ algorithms stably increases, and the running time of $Basic^{+}$ and $Inc$ increases faster than that of $Dec$. For example, on Imdb, When the number of nodes increases from $ 20\%$ to $100\%$, the running time of $Dec$ increase from 0.30s to 0.75s, while that of $Basic^{+}$ increase from 3.38s to 29.93s and that of $Inc$ increase from 0.28s to 3.32s. We see that $Dec$ has better performance than $Inc$ for most cases, but the opposite may occur in some cases with few vertices. This is because $Inc$ algorithm is easier to be affected by the number of vertices than $Dec$.

\noindent\textbf{Evaluating the effect of $ S $.} In this experiment, we evaluate the effect of the experimental parameter $S$ on the efficiency of the algorithms. For each query vertex, we randomly sampling $20\%$, $40\%$, $60\%$, $80\%$ and $100\%$ keywords of it to form the query keyword set $S$. As shown in Fig.11(a)-11(h), We can see that with the increase of $\vert S\vert$, the running time of $Basic^{+}$ and $Inc$ increase rapidly, while that of $Dec$ algorithm increases slowly or almost unchanges. For example, on Actor, the running time of $Dec$ increase form 1.08s to 1.13s, while that of $Basic^{+}$ increase form 2.32s to 14.68s and that of $Inc$ increase form 1.65s to 4.70s. The result shows that $Dec$ performs better than $Basic$ and $Inc$.

\noindent\textbf{Case study.} We conduct queries on the real dateset Southern women (small) from the KONECT (http://konect.cc/networks/), where each vertex in $U$ represents a woman, each vertex in $V$ represents a social activity and each edge indicates the woman participates in the social activity.

\begin{figure}[H]\label{fig}
\centering
\includegraphics[width=80mm]{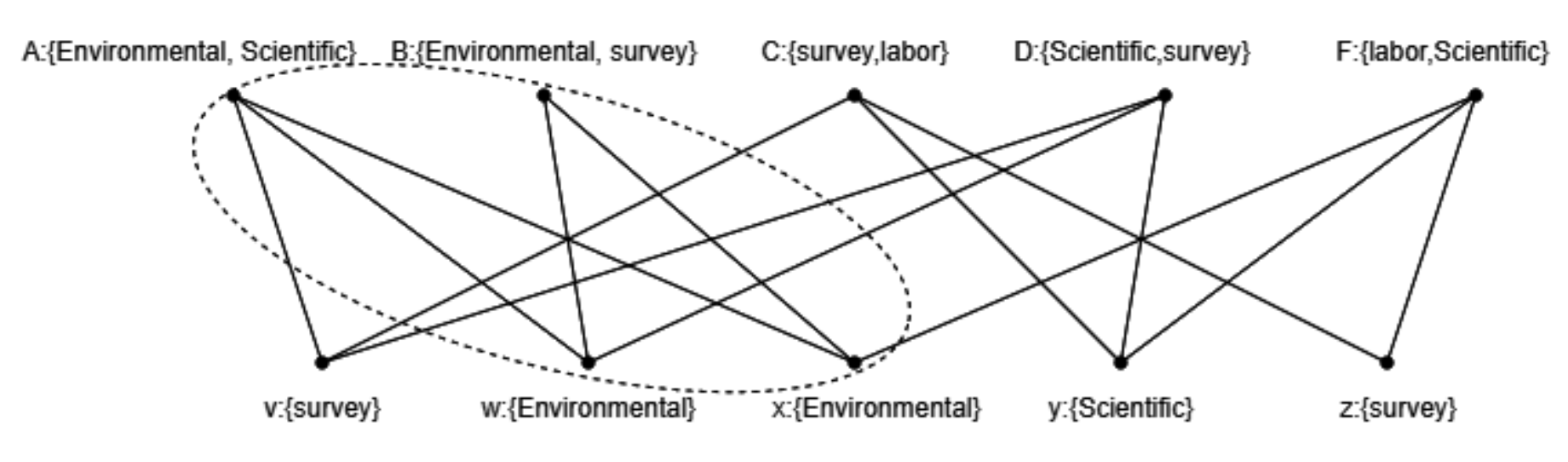}
\vspace{-0.2cm}
\caption{A real person-activity network}
\vspace{-0.5cm}
\end{figure}

We use $A$ as a query vertex, $\alpha$ and $\beta$ are both set to 2, and $S$ contains the keyword “environmental”, the query result is shown in the circled part containing women $\{A, B\}$ and activities $\{w, x\}$ as Fig.11 shows. From the result, we can see the returned people $A$ and $B$ are active participants in environmental activities, and the social activities $w$ and $x$ are all environmental activities with multiple participants from U. In this case, if there is an environmental social activity that needs to recruit team members, then $A$ and $B$ can be given priority because they not only have a preference for environmental social activities but also have experience of cooperation among team members. If we search an (2,2)-community without considering keywords, the result will return the whole women and activities in Fig.12, which includes those who do not often participate in environmental activities. Obviously, the returned candidates cannot be valid team members expected by an environmental activity. This is because we only consider the structure cohesiveness constraint but ignore the keyword cohesiveness constraint.

\section{Conclusion}
In this paper, we study the attributed $(\alpha,\beta)$-community search problem. To solve this problem efficiently, we follow a two-step framework which first generates candidate keyword sets, and then verifies the existence of attributed $(\alpha,\beta)$-community according to each candidate keyword set. Then we develop a basic and two improved query algorithms to retrieve the $(\alpha,\beta)$-community through verifying the candidate keyword sets in a different order.We conduct extensive experiments on real-world graphs, and the results demonstrate the effectiveness of the attributed $(\alpha,\beta)$-community model and the proposed techniques.

\bibliographystyle{ACM-Reference-Format}
\bibliography{ref}

\end{document}